\documentclass[webpdf,contemporary,large]{oup-authoring-template}%

\graphicspath{{Fig/}}

\usepackage{multirow}

\usepackage{wrapfig}

\usepackage{graphicx}
\usepackage[]{geometry}
\usepackage{xcolor}
\definecolor{darkgreen}{RGB}{34, 139, 34} 
\definecolor{ipcolor}{RGB}{242,170,60}
\definecolor{bpcolor}{RGB}{55,126,247}
\usepackage{xspace}
\usepackage{amssymb,amsmath,epsfig}
\usepackage{amsthm}
\usepackage{hyperref}  
\hypersetup{
    colorlinks=true,
    linkcolor=blue, 
    urlcolor=blue,  
    citecolor=blue, 
}

\usepackage{verbatim}
\renewcommand{\cite}[1]{\citep{{#1}}}  

\usepackage{hyperref}\hypersetup{colorlinks=true, citecolor=[rgb]{0,0,1}, urlcolor=black}
\usepackage{pdfcomment}
\usepackage{courier}
\usepackage{float}
\usepackage[skip=4pt]{caption}
\usepackage{subcaption}
\usepackage{tikz}
\usetikzlibrary{shapes,snakes}
\usepackage[capitalise]{cleveref}
\usepackage{tablefootnote}
\usepackage{tabularx}
\usepackage{seqsplit} 

\usepackage{forest}
\usetikzlibrary{arrows.meta,positioning}

\usepackage{booktabs}
\usepackage{siunitx}

\theoremstyle{thmstyleone}%
\newtheorem{theorem}{Theorem}
\theoremstyle{thmstyletwo}%
\theoremstyle{thmstylethree}%
\newtheorem{definition}{Definition}
\newtheorem{corollary}{Corollary}

\newcommand{\nucA}{\ensuremath{\text{\sc A}}}
\newcommand{\nucC}{\ensuremath{\text{\sc C}}}
\newcommand{\nucG}{\ensuremath{\text{\sc G}}}
\newcommand{\nucU}{\ensuremath{\text{\sc U}}}
\newcommand{\pairs}{\ensuremath{\mathit{pairs}}\xspace}
\newcommand{\unpaired}{\ensuremath{\mathit{unpaired}}\xspace}
\newcommand{\MFE}{\ensuremath{\text{\rm MFE}}\xspace}
\newcommand{\UMFE}{\ensuremath{\text{\rm uMFE}}\xspace}

\newcommand{\LP}{\ensuremath{\mathit{loops}}\xspace}
\newcommand{\CR}{\ensuremath{\mathit{critical}}\xspace}

\newcommand{\D}{\ensuremath{\mathrm{\Delta}}}
\newcommand{\DG}{\ensuremath{\mathrm{\Delta} G^\circ}}
\newcommand{\DDG}{\ensuremath{\mathrm{\Delta\Delta} G^\circ}}

\newcommand{\blangle}{\ensuremath{\boldsymbol{\langle}}}
\newcommand{\brangle}{\ensuremath{\boldsymbol{\rangle}}}
\newcommand{\proj}{\ensuremath{\vdash}}

\DeclareMathOperator{\defeq}{\stackrel{\Delta}{=}}

\renewcommand{\vec}[1]{\ensuremath{\boldsymbol{{#1}}}\xspace}

\newcommand{\vecx}{\ensuremath{\vec{x}}\xspace}
\newcommand{\vecy}{\ensuremath{\vec{y}}\xspace}
\newcommand{\vecz}{\ensuremath{\vec{z}}\xspace}

\newcommand{\ystar}{\ensuremath{\vec{y^\star}}\xspace}

\newcommand{\loops}{\ensuremath{\mathit{loops}}\xspace}

\newcommand{\um}{undesignable motif\xspace}
\newcommand{\m}{motif\xspace}
\newcommand{\M}{\ensuremath{\vec{m}}\xspace}
\newcommand{\vecm}{\ensuremath{\vec{m}}\xspace}
\newcommand{\ipairs}{\ensuremath{\mathit{ipairs}}\xspace}
\newcommand{\bpairs}{\ensuremath{\mathit{bpairs}}\xspace}
\newcommand{\mstar}{\ensuremath{\vec{m^\star}}\xspace}

\renewcommand{\emptyset}{\varnothing}
\newcommand{\card}{\ensuremath{\mathit{card}}\xspace}

\newcommand{\pbound}{\ensuremath{\mathit{pbound}}\xspace}
\newcommand{\maxloop}{\ensuremath{\mathit{max\_loop}}\xspace}
\newcommand{\maxwidth}{\ensuremath{\mathit{max\_width}}\xspace}
\newcommand{\maxdepths}{\ensuremath{\mathit{max\_depth}}\xspace}

\newcommand{\highlight}[1]{\emph{#1}}
\newcommand{\blank}{\phantom{*}} 

\newif\ifmarked
\markedtrue         
\markedfalse        

\newcommand{\editcolor}{\ifmarked\color{red}\else\color{black}\fi}

\ifmarked
  \DeclareRobustCommand{\finaledit}[1]{\textcolor{red}{#1}}
  \newcommand{\finaleditcite}[1]{{\hypersetup{citecolor=red}\cite{#1}}}
\else
  \DeclareRobustCommand{\finaledit}[1]{#1}
  \newcommand{\finaleditcite}[1]{\cite{#1}}
\fi
\newsavebox{\looptreebox}
\newsavebox{\approxfigonebox}
\newsavebox{\approxfigtwobox}

\begin{document}

\journaltitle{Journal Title Here}
\DOI{DOI HERE}
\copyrightyear{2022}
\pubyear{2019}
\access{Advance Access Publication Date: Day Month Year}
\appnotes{Paper}

\firstpage{1}


\title[Probabilistic RNA Designability Quantification]{Probabilistic RNA Designability via Interpretable Ensemble Approximation and Dynamic Decomposition}

\author[1]{Tianshuo Zhou\ORCID{0009-0008-4804-0825}}
\author[3,4,5]{David H. Mathews\ORCID{0000-0002-2907-6557}}
\author[1,2,$\ast$]{Liang Huang\ORCID{0000-0001-6444-7045}}

\authormark{Zhou et al.}

\address[1]{\orgdiv{School of EECS}}
\address[2]{\orgdiv{Dept.~of Biochemistry \& Biophysics}, \orgname{Oregon State University}, \country{USA}}
\address[3]{\orgdiv{Dept.~of Biochemistry \& Biophysics}}
\address[4]{\orgdiv{Center for RNA Biology}}

\address[5]{\orgdiv{Dept.~of Biostatistics and Computational Biology}, \orgname{University of Rochester Medical Center}, \country{USA}}

\corresp[$\ast$]{Corresponding author: \href{email:liang.huang.sh@gmail.com}{liang.huang.sh@gmail.com}}

\received{Date}{0}{Year}
\revised{Date}{0}{Year}
\accepted{Date}{0}{Year\vspace{-.5cm}}



\abstract{
\textbf{Motivation:} RNA design, \finaledit{also known as RNA inverse folding}, aims to find RNA sequences that fold into a  target secondary structure.
However, \finaledit{recent work has shown that 
some target structures are provably {\em undesignable},
where no RNA sequence can fold into it as 
the \emph{minimum free energy} (MFE) structure.
In this paper, we go beyond this binary, MFE-based designability
and explore a soft, probability-based 
designability that upperbounds the Boltzmann probability 
of any design and quantifies how easily or likely
any design might possibly fold into the target structure.}
We introduce a theory of ensemble approximation and a probability decomposition framework for bounding the folding probabilities of RNA structures \finaledit{and  motifs} in an explainable way. 
\finaledit{We further develop a linear-time dynamic programming algorithm that efficiently searches over exponentially many decompositions. 
Combining ensemble approximation with dynamic decomposition search, our method efficiently identifies the optimal motif decomposition that yields the tightest probabilistic bound for a given structure.
Our framework is applicable to any {\em factorizable} energy model or scoring function that decomposes onto loops.} \\
\textbf{Results:} Applying our work, \finaledit{\em LinearDecompose}, to both native and artificial RNA structures in the ArchiveII and Eterna100 datasets, we obtained \finaledit{much tighter} probability bounds  than \finaledit{baselines}.
Our work also provides anatomical tools for analyzing RNA structures and \finaledit{pinpointing} the sources of design difficulty at the motif level. \\
 \textbf{Availability:} Source code and data are available at {\url{https://github.com/shanry/RNA-Undesign}}. \\
 \textbf{Supplementary information:} Supplementary text and data are available in a separate PDF.
}
\keywords{RNA Design, RNA Designability, RNA Inverse Folding, Equilibrium Probability}


\maketitle

\section{Introduction}
RNA secondary structures play crucial roles in the functions of non-coding RNAs such as rRNA~\cite{doudna2002chemical} and tRNA~\cite{coller2024trna}. 
RNA design, 
also known as RNA inverse folding, aims to find one or more RNA sequences that fold into a given target structure, typically using the default Turner energy model~\cite{mathews+turner:2006, turner+:2010nndb}.
However, not all RNA structures are designable~\cite{aguirre+:2007computational, yao:2021thesis, zhou+:2024undesignable, zhou+:2025scalable, zhou+:2026:jcb, malik+:2026}. 
Our recent works RIGEND~\cite{zhou+:2024undesignable} and FastMotif~\cite{zhou+:2025scalable} proved that many native structures in the ArchiveII dataset and many artificial structures in the Eterna100 benchmark are undesignable, \finaledit{in the sense
that no sequence can ever fold into them as MFE structures.} However, these methods only study the
\MFE-based criteria and therefore only provide  a {\em binary} notion of designability. As a result, they cannot address the \finaledit{more important ensemble-level design objectives such as equilibrium probability
which, by considering competing structures, is more appropriate for RNA design}~\finaleditcite{Zadeh+:2010, ward+:2023fitness, zhou+:2023samfeo, tang+:2026}.

This perspective motivates a shift from binary \MFE-based criteria to a probabilistic characterization of designability. 
While \MFE-based undesignability can certify that a structure is impossible to be a minimum free energy structure of any RNA sequence, it cannot quantify \emph{how likely} or \finaledit{how easily}
a sequence might possibily fold into that structure.
For example, consider two structures $\vecy_1$ and $\vecy_2$ that are both undesignable under the \MFE
  criterion. It is nevertheless possible that there exists a sequence $\vecx_1$ where~$p(\vecy_1 \,|\, \vecx_1)=0.2$, whereas no sequence can make~$p(\vecy_2 \,|\, \vecx)$ exceed $0.01$. This discrepancy is invisible to \MFE-based analysis but is crucial for understanding practical designability of a target structure. 
  These observations motivate the study of \emph{probabilistic designability}, defined as an upper bound on $p(\ystar \mid \vecx)$ \finaledit{over all sequences} \vecx
  for a given target structure $\ystar$. The tighter the upper bound, the better we understand the 
  \finaledit{designability of that structure}.

Despite its importance, probabilistic designability has received little attention,
with CountingDesign~\cite{yao+:2019,yao:2021thesis} \finaledit{being the only} existing method that addresses this question. However, it relies on exhaustive enumeration: for a given RNA structure or  motif, it enumerates \finaledit{and folds} all (partial) sequences that satisfy its base-pairing constraints and record the maximum equilibrium probability observed. This brute-force strategy is neither scalable nor interpretable, requiring weeks of computation for structures or motifs of length up to 14.

To bridge the gap between \MFE-based and probabilistic \finaledit{designabilities}, we propose two \finaledit{main ideas}: \emph{ensemble approximation} and \emph{probability decomposition}.
Ensemble approximation represents the full RNA folding ensemble using a small, explicitly identified set of rival structures (motifs) that compete with the target structure. This approximated ensemble enables us to derive rigorous upper bounds on folding probability while maintaining full explainability, as each bound is supported by concrete rival structures that thermodynamically dominate the target.
Furthermore, we prove that the probability bound of a target structure is no greater than the \emph{product of the local probability bounds} of its constituent motifs under {\em any} structural decomposition. 
\finaledit{In order to explore as many decompositions 
as possible},
we develop a {\em linear-time} dynamic programming algorithm that  \finaledit{explores} {\em exponentially many} decompositions to find the one \finaledit{with} the tightest upper bound.
Our contributions are:
\begin{enumerate}
\item \textbf{Theory.} We propose a novel and elegant theory of ensemble approximation to interpretably characterize probabilistic RNA designability. We also prove that the probability bound of a structure can be \finaledit{factored into products of} local bounds of structural motifs, revealing a more nuanced connection between local and global \finaledit{designabilities}.
\item \textbf{Algorithms.} We develop efficient algorithms \finaledit{(Algs.~\ref{alg:approxe_1} and \ref{alg:approxe_2})} for approximating Boltzmann ensembles using rival structures or motifs, and introduce a linear-time dynamic programming approach \finaledit{(Algorithm~\ref{alg:topdown-dp})}  that efficiently explores exponentially many decompositions to obtain the \finaledit{tightest probability bound}.
\item \textbf{Application.} 
Applying our methods to both native and artificial RNA structures in the ArchiveII and Eterna100 \finaledit{datasets}, we obtained probability bounds that are much tighter than prior approaches.
In addition, our methods further provide anatomical tools for analyzing RNA structures and understanding the sources of design difficulty at the motif level.
\end{enumerate}

\vspace{-0.8cm}
\section{\finaledit{Preliminaries: RNA Structures and Motifs}}\label{sec2}

\begin{figure}[t]
    \centering
    \pdftooltip{%
    \includegraphics[width=0.7\linewidth]{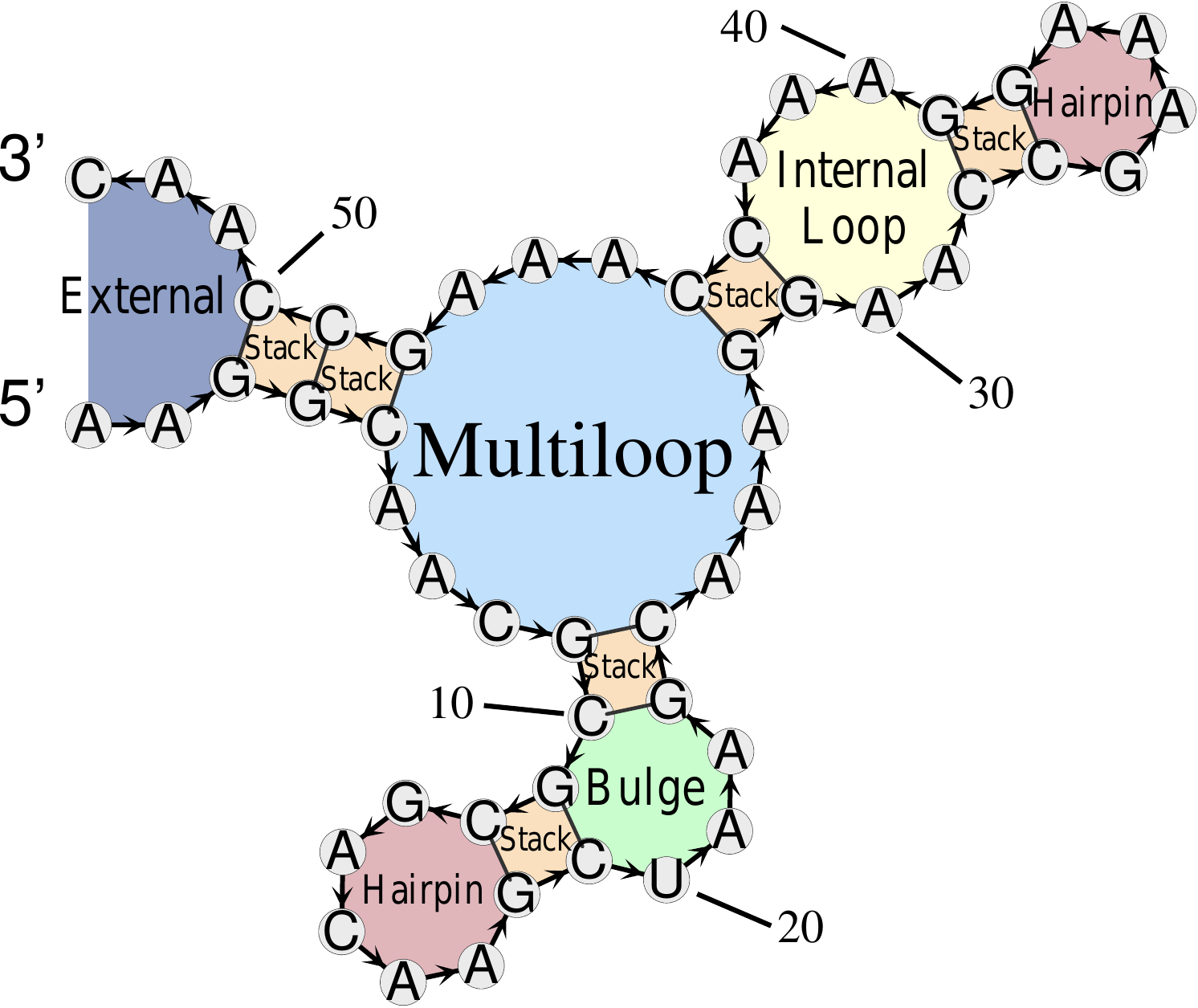}%
    }{Diagram of an RNA secondary structure annotated with its loop types, including hairpin loops, internal loops, bulges, stacks, a multiloop, and an external loop, with arrows indicating 5' to 3' direction.}
    \caption{A secondary structure and its loops \finaledit{(including stacks)}.}
    \vspace{-0.3cm} 
    \label{fig:loop}
\end{figure}
An RNA sequence $\vecx$ of length $n$ is a string of  nucleotides $\vecx_1\vecx_2\dots \vecx_n$,  where $\vecx_i \in \{\nucA, \nucC, \nucG, \nucU\}$. 
A secondary structure~$\mathcal{P}$  for $\vecx$ is a set of paired indices where each pair $(i, j) \in \mathcal{P}$ indicates two distinct bases $\vecx_i \vecx_j \in \{\nucC\nucG,\nucG\nucC,\nucA\nucU,\nucU\nucA, \nucG\nucU,\nucU\nucG\}$ and \finaledit{each base 
 only be paired at most once}. A secondary structure is pseudoknot-free \finaledit{(i.e., nested pairs only)} if there are no two pairs $(i, j)\in \mathcal{P}\text{ and }(k, l)\in~\mathcal{P}$ such that  $i \!<\! k \!<\! j \!<\! l$. \finaledit{This work does not consider pseudoknots.}
 $\mathcal{P}$ can also be represented as a \finaledit{dot-bracket} string~$\vecy=\vecy_1\vecy_2\dots \vecy_n$,  where a pair of indices $(i, j) \in\mathcal{P}$ corresponds to $\vecy_i=``("$ and $\vecy_j=``)"$ and any unpaired index $k$ corresponds to $\vecy_k=``."$. The unpaired indices in $\vecy$ are denoted as $\unpaired(\vecy)$ and the set of paired indices in $\vecy$ is denoted as $\pairs(\vecy)$, which  equals~$\mathcal{P}$. 
\finaledit{For more details, see LinearFold~\cite{huang+:2019} and LinearPartition~\cite{zhang+:2020}.}

The \emph{ensemble} of an RNA sequence $\vecx$  is the set of all secondary structures that  $\vecx$ can possibly fold into, denoted as $\mathcal{Y}(\vecx)$. The \highlight{free energy (change)} $\DG(\vecx, \vecy)$   to characterizes the stability of $\vecy \in \mathcal{Y}(\vecx)$. The lower 
$\DG(\vecx, \vecy)$, 
the more stable the  structure $\vecy$ for $\vecx$. 

A structure \vecy is composed of a set of loops denoted as $\LP(\vecy)$ \finaledit{where stacks of adjacent base pairs are also considered special loops}, as illustrated in Fig.~\ref{fig:loop}. 
 The free energy change of a secondary structure $\vecy$ is the sum of the free energy change of each loop, i.e., 
 \vspace{-0.2cm}
 \begin{equation}
 \DG(\vecx, \vecy) = \sum_{\vecz \in \loops(\vecy)} \DG(\vecx, \vecz), \label{eq:free_energy}
 \end{equation}
where each term $ \DG(\vecx, \vecz)$ is the energy for loop \vecz. The energy of each loop is typically determined by nucleotides on the positions of enclosing pairs and their
adjacent mismatch positions, which are named as \emph{critical positions} and denoted as $\CR(\vecz)$.
See {Supplementary Section~\ref{sec:loops}} for a detailed explanation of different loop types and their critical positions.
The structure with the \highlight{minimum free energy} is the most stable structure in the ensemble $\mathcal{Y}(\vecx)$. The minimum free energy of $\mathcal{Y}(\vecx)$ is 
defined as
\vspace{-0.2cm}
\begin{equation}
\MFE(\vecx)=\min_{\vecy \in \mathcal{Y}(\vecx)} \DG(\vecx, \vecy). \label{eq:mfe_df}
\end{equation}
\finaledit{The {partition function} sums up the Boltzmann terms of all structures,}
\begin{equation}
\finaledit{Q(\vecx)=\sum_{\vecy \in \mathcal{Y}(\vecx)}e^{-\DG(\vecx, \vecy)/RT},}
\end{equation}
\finaledit{which defines
the equilibrium probability or Boltzmann} probability of a structure $\vecy$ in the ensemble
\begin{equation}
p(\vecy \mid \vecx) = \frac{e^{-\DG(\vecx, \vecy)/RT}}{Q(\vecx)} = \frac{e^{-\DG(\vecx, \vecy)/RT}}{\sum_{\vecy' \in \mathcal{Y}(\vecx)}e^{-\DG(\vecx, \vecy')/RT}}. \label{eq:prob}
\end{equation}

 A (structural) \highlight{motif} is defined as a contiguous set of loops in a structure, which 
 \finaledit{generalizes RNA structures} (See Supplementary Section~\ref{sec:motif} or FastMotif~\cite{zhou+:2024undesignable} for details).
 The thermodynamic definitions in \cref{eq:free_energy,eq:mfe_df,eq:prob} can be naturally extended to motifs. For example, the equilibrium probability of a motif \vecm in the ensemble of \finaledit{a (partial) sequence} \vecx is
\begin{equation}
p(\vecm \mid \vecx) = \frac{e^{-\DG(\vecx, \vecm)/RT}}{Q(\vecx)} = \frac{e^{-\DG(\vecx, \vecm)/RT}}{\displaystyle\sum_{\vecm' \in \mathcal{M}(\vecx)}e^{-\DG(\vecx, \vecm')/RT}}, \label{eq:mprob}
\end{equation}
where $\mathcal{M}(\vecx)$ is the motif ensemble of  (partial) sequence \vecx.

\vspace{-0.3cm}
\section{\finaledit{RNA Designability: MFE vs.~Probabilistic}}

\finaledit{We first review the MFE-based designability before moving on to probabilistic designability.}

\vspace{-0.3cm}
\subsection{\MFE-based Designabiity}
Given a target structure $\ystar$, \MFE-based RNA design aims to find a suitable RNA sequence $\vecx$ such that $\ystar$ is an \MFE structure of $\vecx$. 
We define $\mathcal{X}(\vecy)$ to be the \finaledit{design space} of structure \vecy,
which is the set of all RNA sequences 
that might fold into \vecy,
i.e., 
$\mathcal{X}(\vecy)=\{ \vecx \mid \vecy \in \mathcal{Y}(\vecx) \}$.
Structure $\ystar$ is undesignable by the \MFE criterion if and only if
\begin{equation}
 ~\forall \vecx \in \mathcal{X}(\vecy),  \exists \vecy' \ne \ystar , \DG(\vecx, \vecy') <  \DG(\vecx, \vecy), \label{eq:mfe}
\end{equation}
\finaledit{which means for any sequence \vecx there is always a rival structure $\vecy'$ that is energetically more favored than \ystar.}

\finaledit{Similarly,} following the unique \MFE (\UMFE) criterion from previous studies~\cite{halevs+:2015combinatorial,yao+:2019,ward+:2023fitness,zhou+:2023samfeo}, a structure $\ystar$ is undesignable by the (\UMFE) criterion if and only if
\begin{equation}
 ~\forall \vecx \in \mathcal{X}(\vecy),  \exists \vecy' \ne \ystar , \DG(\vecx, \vecy') \leq  \DG(\vecx, \vecy), \label{eq:umfe}
\end{equation}
\finaledit{which means for any sequence \vecx there is always a rival structure $\vecy'$ that is energetically no worse than \ystar.}

\vspace{-0.2cm}
\subsection{Probabilistic Designability}
\MFE-based designability cannot quantify {how likely} 
or \finaledit{how easily} a target structure can possibly form in the ensemble of any RNA sequence. To address this problem, we aim to find an upper bound $\pbound(\vecy)$ such that 
\vspace{-0.2cm} 
\[\finaledit{\pbound(\vecy) \geq \max_{\vecx} p(\vecy \mid \vecx).}\]
While it is hard to obtain the exact value of $\max_{\vecx} p(\vecy \,|\, \vecx)$, the tighter $\pbound(\vecy)$ is, the better we know about the probabilistic designability of \vecy.
Similarly, we can also find an upper bound $\pbound(\vecm)$ for a motif \vecm such that $\pbound \geq (\vecm)\max_{\vecx} p(\vecm \mid \vecx) $.

\vspace{-0.5cm}

\section{Ensemble Approximation via Rival Search}\label{sec3}

Rival structures (motifs) play a pivotal role in the previous works RIGEND~\cite{zhou+:2024undesignable} and FastMotif~\cite{zhou+:2025scalable}, which can prove RNA structures (motifs) undesignable under the \UMFE criterion.
The central idea is to identify a small set of rival structures (motifs)
$$Y_r = \{\vecy_1, \vecy_2, \ldots, \vecy_k\}, \ystar \notin Y_r,$$
such that
\begin{equation}
\forall \vecx, \exists \vecy'  \in Y_r,   \DG(\vecx, \vecy') \leq \DG(\vecx, \ystar). \label{eq:rival} 
\end{equation} 
In other words, for every RNA sequence \vecx, at least one explicitly identified rival structure is energetically no worse than the target structure. This condition guarantees that the target structure can never be the unique minimum free energy conformation of any sequence and is therefore \UMFE-undesignable.

While effective, both RIGEND and FastMotif treat rival structures as \emph{independent} competitors. Consequently, the collective thermodynamic effect of multiple rival structures is left unexploited. In this section, we introduce a new theory of Boltzmann ensemble approximation that leverages the collective contribution of rival structures to derive an upper bound on the equilibrium probability of the target structure.

To make the derivation intuitive, we begin with the simplest case, in which the ensemble is approximated by the target structure \finaledit{plus} a single rival structure.
We then extend this theory to the general case, where the ensemble approximation consists of multiple rival structures (motifs).  
Since motifs are generalizations of structures, we present the theory using structures for notational simplicity; all results apply to motifs directly.

\subsection{Single Rival Structure (Motif)}
We illustrate the idea using the Eterna100 structure ``Simple Single Bond'' (we cut off trailing unpaired bases to fit the page). When attempting to design this target structure $\ystar$ (Fig.~\ref{fig:ex1_approxe}), the resulting sequence $\vecx$ consistently folds into a different but structurally very similar conformation $\vecy'$.

\begin{figure}[t]
\centering
\begin{lrbox}{\approxfigonebox}%
\begin{minipage}{\linewidth}%
\footnotesize
\setlength{\tabcolsep}{3pt}
\begin{tabularx}{\linewidth}{@{} r @{:\ } >{\ttfamily\raggedright\arraybackslash}X @{}}
$\ystar$   & \seqsplit{.....{\underline{\color{blue}.(.}}.......{\underline{\color{blue}.(}}(((.....))){\underline{\color{blue}).}}.......{\underline{\color{blue}.).}}..}\\
$\vecy'\,$ & \seqsplit{.....{\underline{\color{blue}...}}.......{\underline{\color{blue}.(}}(((.....))){\underline{\color{blue}).}}.......{\underline{\color{blue}...}}..}\\
$\vecx$    & \seqsplit{AUAAG{\underline{\color{blue}CGG}}UAAAAAA{\underline{\color{blue}AG}}UGCGAAAAGCA{\underline{\color{blue}UG}}AAAAAAA{\underline{\color{blue}ACA}}GA}\\
$\Delta$   & \blank\blank\blank\blank\blank***%
             \blank\blank\blank\blank\blank\blank\blank**%
             \blank\blank\blank\blank\blank\blank\blank\blank\blank\blank\blank**%
             \blank\blank\blank\blank\blank\blank\blank***\\
\end{tabularx}
\end{minipage}%
\end{lrbox}%
\pdftooltip{\usebox{\approxfigonebox}}{Table showing target structure ystar in dot-bracket notation, rival structure y-prime, RNA sequence x (AUAAGCGGUAAAAAAAGUGCGAAAAGCAUGAAAAAAAACAGA), and differential positions Delta marked with asterisks, illustrating ensemble approximation with a single rival structure for the Eterna100 structure Simple Single Bond.}
\caption{Ensemble approximation with a single rival structure.}
\label{fig:ex1_approxe}
\end{figure}
As proven in RIGEND~\cite{zhou+:2024undesignable}, it turns out that the free energy change difference between $\vecy'$ and $\ystar$ 
$$\DDG(\vecx, \vecy', \ystar) \defeq \DG(\vecx, \vecy') - \DG(\vecx, \ystar)$$ 
is only dependent on \finaledit{a number of} \emph{differential positions} \finaledit{$\D(\vecy', \ystar)$ \finaledit{(e.g., pairs and mismatches)}, defined as}
$$\D(\vecy', \ystar) \defeq \bigcup_{\vecz \in \LP(\ystar)\ominus \LP(\vecy')} \CR(\vecz).\footnote{\finaledit{$\ominus$ denotes the symmetric difference between two sets.}}$$
Fig.~\ref{fig:ex1_approxe} annotates the differential positions. 
Consequently, to verify that $\DDG(\vecx, \vecy', \ystar) \!<\!0$ holds
for all sequences \vecx, it suffices to enumerate nucleotide assignments restricted to $\D(\vecy', \ystar)$.\footnote{\finaledit{The maximum size of $\D(\vecy', \ystar)$ or enumeration is constrained to bound complexity or running time.}} This reduction enables an efficient and interpretable proof that $\ystar$ is undesignable under the \UMFE criterion. Beyond the \UMFE undesignability, however, we notice that this process yields additional thermodynamic insight. In particular, rather than merely establishing the sign of $\DDG(\finaledit{\vecx}, \vecy', \ystar)$, we can compute the quantity,
$$\max_{\vecx}  \DDG(\vecx, \vecy', \ystar),$$
which represents the guaranteed energy advantage of the rival structure over the target across all sequences. This quantity directly leads to an upper bound on the equilibrium probability of the target structure: 
\begin{equation}
\begin{aligned}
p(\boldsymbol{y^\star} \mid \boldsymbol{x}) &= \frac{e^{-\DG(\boldsymbol{x}, \boldsymbol{y^\star}) / R T}}{\sum_{\boldsymbol{y} \in \mathcal{Y}(\boldsymbol{x})} e^{-\DG(\boldsymbol{x}, \boldsymbol{y}) / R T}} \\
& \leq \frac{e^{-\DG(\boldsymbol{x}, \boldsymbol{y^\star}) / R T}}{ e^{-\DG(\boldsymbol{x}, \boldsymbol{y^\star}) / R T} + e^{-\DG(\boldsymbol{x}, \boldsymbol{y'}) / R T}} \\
& = \frac{1}{1 + e^{-(\DG(\boldsymbol{x}, \boldsymbol{y'}) - \DG(\boldsymbol{x}, \boldsymbol{y^\star})) / R T}}.
\end{aligned}
\end{equation}
Taking the maximum over all sequences $\vecx$, we obtain the following upper bound:
\begin{equation}
\max_{\vecx} p(\boldsymbol{y^\star} \mid \boldsymbol{x}) \leq \frac{1}{1 + e^{-\max_{\vecx} \DDG(\vecx, \vecy', \ystar) / RT}} .
\end{equation}
 Algorithm~\ref{alg:approxe_1} shows 
how to identify an upper bound for $\max_{\vecx} p(\boldsymbol{y^\star} \mid \boldsymbol{x})$  using a single rival structure \finaledit{$\vecy'$, which can be obtained by first taking a sequence $\vecx$ designed for $\ystar$ and then folding $\vecx$ (assume $\MFE(\vecx)$ is close to \ystar)}.
\finaledit{Note that $\vdash$ denotes the projection of a sequence $\vecx$ onto a set of positions ($\Delta(\vecy', \ystar)$). (See Supp. Sec.~\ref{sec:proj})} 

\begin{algorithm}[t]
\caption{Ensemble Approximation with a Single Rival Structure (Motif)}\label{alg:approxe_1}
\begin{algorithmic}
\vspace{-0.2cm}
\State
\Function{EnsembleApproximation}{$\ystar$, $\vecy'$}
    \State \finaledit{$\displaystyle\DDG_{\max} \gets \!\!\max_{\hat{\vecx} \in \{\vecx \vdash \Delta(\vecy', \ystar) \mid \vecx \in \mathcal{X}(\ystar) \}} \!\!\!\DDG(\hat{\vecx}, \vecy', \ystar)$}
    \State \Return $\frac{1}{1 + e^{-\DDG_{\max} / RT}}$  \Comment{upper bound $pbound(\ystar)$}
\EndFunction
\end{algorithmic}
\end{algorithm}

\subsection{Multiple Rival Structures (Motifs)}
 Algorithm~\ref{alg:approxe_1} is effective and interpretable, \finaledit{but} the resulting bound is limited by the coarse approximation of the ensemble using only one rival structure. 
 Incorporating more rival structures \finaledit{will} yield a tighter approximation, though enumerating the full ensemble is infeasible.

\begin{figure}[b]
\centering
\begin{lrbox}{\approxfigtwobox}%
\begin{minipage}{0.85\linewidth}%
\footnotesize
\setlength{\tabcolsep}{3pt}

\begin{tabularx}{\linewidth}{@{} r @{:\ } >{\ttfamily\raggedright\arraybackslash}X @{}}
$\ystar$
  & \texttt{\seqsplit{....((((({\underline{\color{blue}(((.(....)).).)}}.)))))....}} \\

$\vecy'^1$
  & \texttt{\seqsplit{....((((({\underline{\color{blue}((..(....)..).)}}.)))))....}} \\

$\vecy'^2$
  & \texttt{\seqsplit{....((((({\underline{\color{blue}(((.(....)))..)}}.)))))....}} \\

$\vecy'^3$
  & \texttt{\seqsplit{....((((({\underline{\color{blue}(((((....)))).)}}.)))))....}} \\

$\vecy'^4$
  & \texttt{\seqsplit{....((((({\underline{\color{blue}(((.(....).)).)}}.)))))....}} \\

$\vecy'^5$
  & \texttt{\seqsplit{....((((({\underline{\color{blue}(.(((....)).).)}}.)))))....}} \\

$\vecy'^6$
  & \texttt{\seqsplit{....((((({\underline{\color{blue}(((((...))).).)}}.)))))....}} \\

$\vecy'^7$
  & \texttt{\seqsplit{....((((({\underline{\color{blue}(.(((....)))..)}}.)))))....}} \\

$\vecy'^8$
  & \texttt{\seqsplit{....((((({\underline{\color{blue}(..((....))...)}}.)))))....}} \\

$\vecy'^9$
  & \texttt{\seqsplit{....((((({\underline{\color{blue}(.............)}}.)))))....}} \\

$\Delta$ &~\blank\blank\blank\blank\blank\blank\blank\blank***************\\

\end{tabularx}
\end{minipage}%
\end{lrbox}%
\pdftooltip{\usebox{\approxfigtwobox}}{Table showing target structure ystar and nine rival structures y-prime-1 through y-prime-9 in dot-bracket notation for the Eterna100 structure Zigzag Semicircle, with differential positions Delta spanning 15 consecutive positions marked with asterisks, illustrating ensemble approximation with multiple rival structures.}
\caption{Ensemble approximation with nine rival structures.}
\label{fig:ex2_approxe}
\end{figure}

We therefore generalize the framework to include multiple rival structures. Figure~\ref{fig:ex2_approxe} illustrates this idea using the Eterna100 structure ``Zigzag Semicircle". Our previous work~\cite{zhou+:2024undesignable} proved this structure \UMFE-undesignable by identifying nine rival structures
\[
\forall \vecx, \exists \vecy' \in Y_r = \{\vecy'^1, \vecy'^2,\ldots, \vecy'^9\},  \text{s.t.~}
\DDG(\vecx, \vecy', \ystar)\!\leq\! 0.\notag
\]
Rather than selecting a single rival, we include all nine rivals in the ensemble approximation to obtain a tighter bound. Specifically,

\begin{equation}
\hspace{-0.4cm}
\begin{aligned}
p(\boldsymbol{y^\star} \mid \boldsymbol{x}) &\leq \frac{e^{-\finaledit{\DDG}(\boldsymbol{x}, \boldsymbol{y^\star}) / R T}}{ e^{-\finaledit{\DDG}(\boldsymbol{x}, \boldsymbol{y^\star}) / R T} + \sum_{\vecy' \in Y_r} e^{-\finaledit{\DDG}(\boldsymbol{x}, \boldsymbol{y'}) / R T} } \hspace{-1cm}\\
& \leq \frac{1}{1 + \sum_{\vecy' \in Y_r} e^{-(\finaledit{\DDG}(\boldsymbol{x}, \boldsymbol{y'}) - \finaledit{\DDG}(\boldsymbol{x}, \boldsymbol{y^\star})) / R T} } \hspace{-1cm}\\
& \leq \frac{1}{1 + \sum_{\vecy' \in Y_r} e^{-\DDG(\boldsymbol{x}, \boldsymbol{y'}, \ystar) / R T} }  \hspace{-1cm}\\
& \leq \frac{1}{1 +  e^{-\DDG(\boldsymbol{x}, Y_r, \ystar) / R T} },
\end{aligned}
\hspace{-0.3cm}
\end{equation}
where $e^{-\DDG(\boldsymbol{x}, Y_r, \ystar) / R T} \finaledit{\defeq \sum_{\vecy' \in Y_r} e^{-\DDG(\boldsymbol{x}, \boldsymbol{y'}, \ystar) / R T}}$ summarizes the collective energetic dominance of the rival set over the target structure. Taking the maximum over all sequences yields
\begin{equation}
\max_{\vecx} p(\boldsymbol{y^\star} \mid \boldsymbol{x}) \leq \frac{1}{1 + e^{-\max_{\vecx} \DDG(\vecx, Y_r, \ystar) / RT}} .
\end{equation}

As in the single-rival case, exhaustive enumeration over all sequences is unnecessary. Owing to the sparsity of loop energy contributions in the Turner model, it suffices to enumerate nucleotides on the \emph{overall differential positions}
\begin{equation}
\D(Y_r, \ystar) = \cup_{\vecy' \in Y_r} \D(\vecy', \ystar).\label{eq:dp_multiple}
\end{equation}

Algorithm~\ref{alg:approxe_2} outlines the resulting procedure to compute the upper bound using multiple rival structures.

\begin{algorithm}[b]
\caption{Ensemble Approximation with Multiple Rival Structures (Motifs)}\label{alg:approxe_2}
\begin{algorithmic}
\vspace{-0.2cm}
\State
\Function{EnsembleApproximation}{$\ystar$, $Y_r$}
    \State \finaledit{$\displaystyle\DDG_{\max} \gets 
    \!\! \max_{\hat{\vecx} \in \{\vecx \vdash \Delta(Y_r, \ystar) \mid \vecx \in \mathcal{X}(\ystar) \}} 
    \!\!\DDG(\hat{\vecx}, Y_r, \ystar)$}
    \State \Return $\frac{1}{1 + e^{-\DDG_{\max} / RT}}$  \Comment{upper bound $pbound(\ystar)$}
\EndFunction
\end{algorithmic}
\end{algorithm}


\subsection{Complexity Analysis}
Algorithm~\ref{alg:approxe_1} is a special case of Algorithm~\ref{alg:approxe_2} with a single rival structure.
The computational complexity of Algorithm~\ref{alg:approxe_2} is determined by the size of the differential positions $\D(Y_r,\ystar)$ \finaledit{defined in Eq.~\ref{eq:dp_multiple}}, as the algorithm enumerates all nucleotide assignments restricted to these positions.
Specifically, the time complexity is
\begin{equation}
\editcolor
O\!\left(6^{|\pairs(\D(Y_r, \ystar))|} \cdot 4^{|\unpaired(\D(Y_r, \ystar))|}\right),
\end{equation}
where paired positions admit six canonical base-pair types and unpaired positions admit four nucleotides.

In practice, to control runtime, we sample rival structures (motifs) by folding sequences compatible with the target structure (motif) and impose an upper limit on the number of differential assignments enumerated.
If this limit is exceeded and no suitable rival structures can be sampled, the corresponding structure or motif is skipped, trading completeness for efficiency.

\section{Linear-time Dynamic Programming over Exponentially Many Decompositions}\label{sec4}

\subsection{Structure and Probability Decomposition}
\begin{figure}[t]
  \centering
  \pdftooltip{\includegraphics[width=0.55\linewidth]{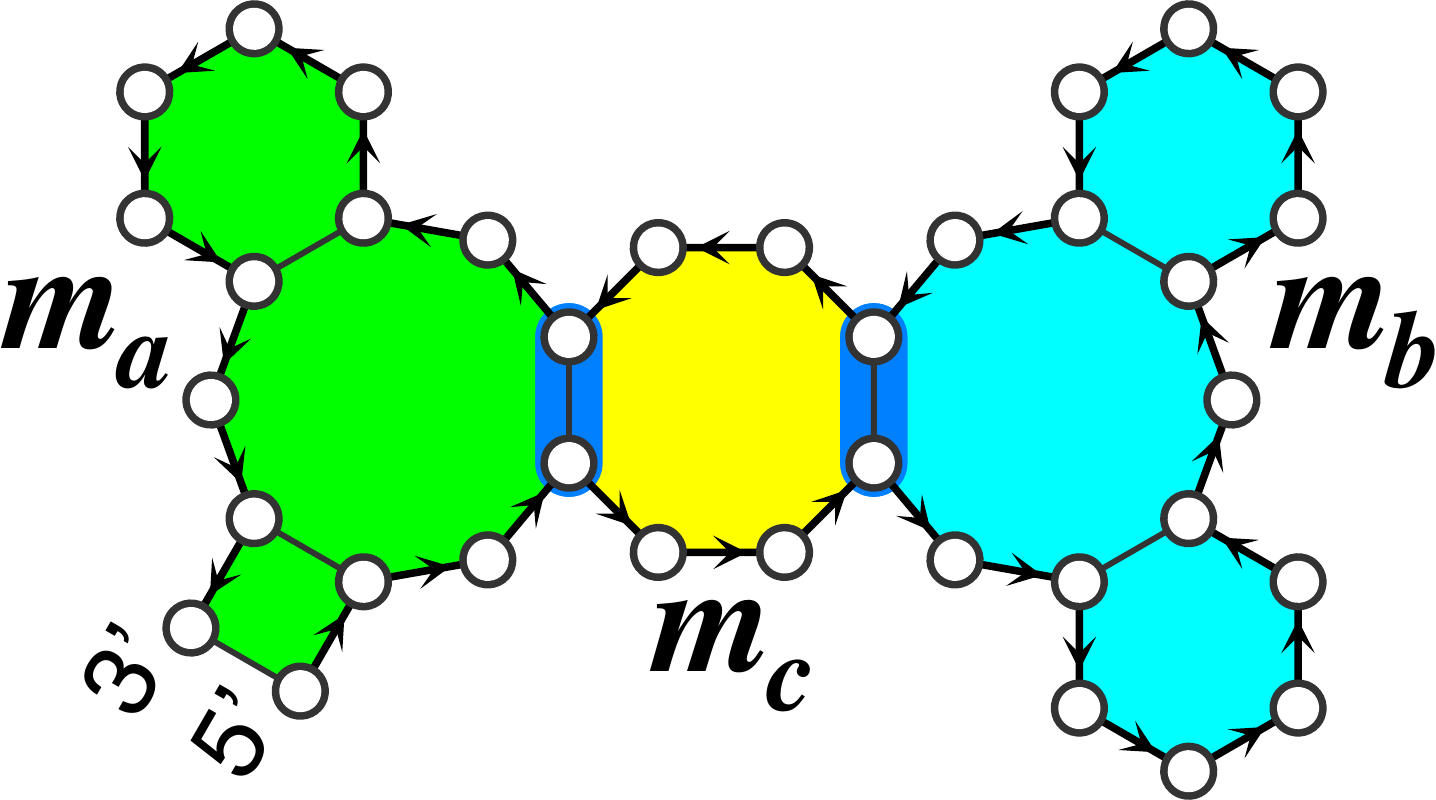}}{Diagram of the Eterna100 RNA structure ``multilooping fun'' decomposed into three non-overlapping motifs $m_a$, $m_b$, and $m_c$, each highlighted in a different color.}
\caption{Example of structure decomposition for the Eterna100 structure ``multilooping fun", which is decomposed into 3 \finaledit{motifs} $\vecm_a, \vecm_b, \text{and } \vecm_c$ highlighted in different colors.}
\label{fig:decomposition_eterna57}
\end{figure}
\begin{figure}
  \centering
  \pdftooltip{\includegraphics[width=0.4\linewidth]{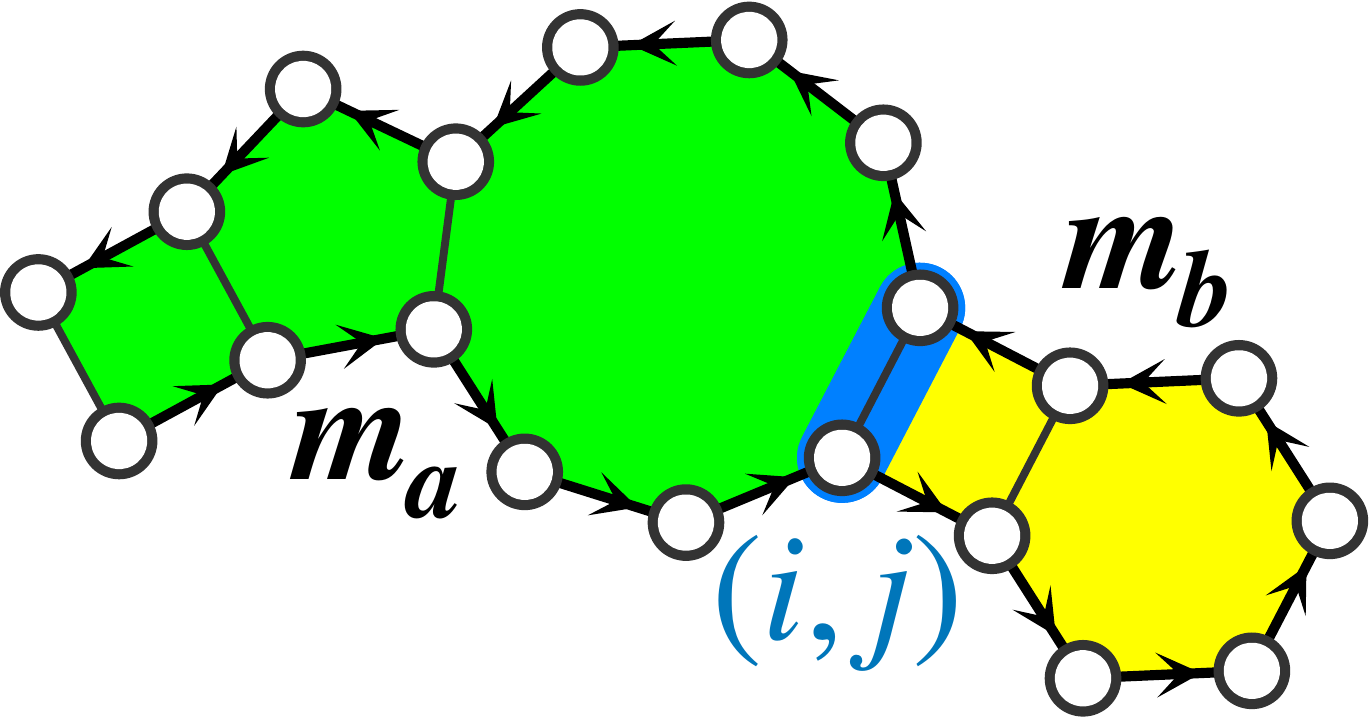}}{Diagram illustrating a motif m being split into two submotifs $m_a$ and $\m_b$ at a boundary base pair $(i, j)$.}
\caption{A motif \vecm is split into $\vecm_a$ and $\vecm_b$ at the base pair $(i, j)$}
\label{fig:split}
\end{figure}

FastMotif~\cite{zhou+:2025scalable} shows that \MFE-based structure undesignability can often be attributed to \emph{local} (minimal) undesignable motifs.
For example, Fig.~\ref{fig:decomposition_eterna57} decomposes the Eterna100 structure \emph{multilooping fun} into three motifs $\vecm_a$, $\vecm_b$, and $\vecm_c$; in this instance, either $\vecm_a$ or $\vecm_b$ alone can certify the \UMFE-undesignability of the full structure, providing a more localized and interpretable explanation than relying solely on global rival structures.


The principle that global designability is constrained by local designability also extends to \emph{probabilistic} designability.
CountingDesign~\cite{yao:2021thesis} proved that for any motif $\vecm$ contained in a target structure $\ystar$, the motif probability upper-bounds the structure probability:
 \begin{equation}
 \forall \vecm \in \ystar, p(\vecm \mid \vecx) \finaledit{\;\geq\; }  p(\ystar \mid \vecx).
 \end{equation}
Applying this result to Fig.~\ref{fig:decomposition_eterna57} yields, for example,
$p(\ystar \mid \vecx) \le p(\vecm_a \mid \vecx)$ and $p(\ystar \mid \vecx) \le p(\vecm_b \mid \vecx)$.

However, CountingDesign \finaledit{bounds} $p(\ystar \,|\, \vecx)$ by the probability of a \emph{single} motif. The joint impact of multiple non-overlapping motifs remains unexplored.
\finaledit{Here} we show that when a structure can be decomposed into non-overlapping motifs, the structure probability is upper-bounded by the \emph{product} of the motif probabilities.

\begin{theorem}[Probability decomposition over non-overlapping motifs]
If a \finaledit{pseudoknot-free} structure $\ystar$ can be decomposed into a set of non-overlapping motifs
$\boldsymbol{M}=\{\vecm_1,\vecm_2,\ldots,\vecm_C\}$,
then for any sequence $\vecx \in \mathcal{X}(\ystar)$,
\begin{equation}
p(\ystar \mid \vecx) \le \prod_{\vecm \in \boldsymbol{M}} p(\vecm \mid \vecx).
\label{eq:prob_decomposition}
\end{equation}
\end{theorem}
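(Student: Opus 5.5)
The plan is to induct on the number $C$ of motifs, reducing to the case of a single split. Fig.~\ref{fig:split} shows this case: a motif $\vecm$ is cut at a boundary base pair $(i,j)$ into an outer part $\vecm_a$ and an inner part $\vecm_b$.

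Because $\ystar$ is pseudoknot-free, the loops of a non-overlapping decomposition form a tree, namely the loop tree of $\ystar$. Each motif is a connected subtree, and adjacent motifs share exactly one boundary pair. We can therefore always find a motif whose subtree is a leaf component, attached to the rest through a single pair $(i,j)$. Peeling it off leaves a structure (or motif) decomposed into $C-1$ motifs. Hence it suffices to prove the two-piece inequality
\[
p(\vecm \mid \vecx) \le p(\vecm_a \mid \vecx)\, p(\vecm_b \mid \vecx)
\]
for every motif $\vecm$ split at a pair into $\vecm_a$ and $\vecm_b$. We apply it first with $\vecm=\ystar$ and then recursively to the remainder, which gives Eq.~\ref{eq:prob_decomposition} by induction.

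For the two-piece inequality we write $\vecx_a$ and $\vecx_b$ for the restrictions of $\vecx$ to the positions of $\vecm_a$ and $\vecm_b$. Both contain $i$ and $j$, and the inner piece $\vecm_b$ is rooted at $(i,j)$. Loop additivity (Eq.~\ref{eq:free_energy}) gives $\DG(\vecx,\vecm)=\DG(\vecx_a,\vecm_a)+\DG(\vecx_b,\vecm_b)$, because each loop of $\vecm$ lies in exactly one piece.

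The key step is a lower bound on the denominator, $Q(\vecx)\ge Q(\vecx_a)\,Q(\vecx_b)$, where $Q(\vecx_a)$ and $Q(\vecx_b)$ are the motif partition functions over $\mathcal{M}(\vecx_a)$ and $\mathcal{M}(\vecx_b)$. To prove it, I will construct an injective gluing map $\phi:\mathcal{M}(\vecx_a)\times\mathcal{M}(\vecx_b)\to\mathcal{M}(\vecx)$ with $\DG(\vecx,\phi(\vecm_a',\vecm_b'))=\DG(\vecx_a,\vecm_a')+\DG(\vecx_b,\vecm_b')$. The construction uses the motif-ensemble convention (Supp. Sec.~\ref{sec:motif}): boundary pairs are fixed, and the region inside the boundary pair of $\vecm_a$ is treated as a closed slot whose contents carry no energy from the $\vecm_a$ side. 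Then $\vecm_a'$ specifies pairs outside $(i,j)$ and $\vecm_b'$ specifies pairs inside it. Their union is nested, hence a valid structure in $\mathcal{M}(\vecx)$, and its loops split exactly into loops of $\vecm_a'$ and loops of $\vecm_b'$. In particular, the loop closed by $(i,j)$ from outside belongs to $\vecm_a'$ and the loop closed from inside belongs to $\vecm_b'$, so no loop is counted twice. The map is injective, since both components are recovered by restriction. Because $\phi$ is injective and preserves energy, summing Boltzmann weights over its image gives $Q(\vecx)\ge Q(\vecx_a)Q(\vecx_b)$. Dividing the factored numerator $e^{-\DG(\vecx,\vecm)/RT}$ by this bound yields the two-piece inequality.

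The main obstacle is making the gluing map precise at the shared pair $(i,j)$. We must check that the motif ensembles are defined so that $(i,j)$ is forced paired (or both ensembles agree on treating it as fixed), and that energy contributions involving the critical positions of $(i,j)$, such as dangles or mismatches on both sides, are attributed to exactly one side. If the paper's motif ensemble allows rivals that do not contain $(i,j)$, then $\phi$ is still injective into a subset of $\mathcal{M}(\vecx)$, and the inequality survives because dropping terms only decreases $Q$. I would first verify the factorization loop type by loop type using the critical-position definitions in Supp. Sec.~\ref{sec:loops}.
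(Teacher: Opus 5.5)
Your proposal is correct and is essentially the paper's argument. The paper also reduces to a single split of $\vecm$ at a boundary pair $(i,j)$, writes $p(\vecm\mid\vecx)=p(P_{i,j}\mid\vecx)\,p(\vecm_a\mid\vecx)\,p(\vecm_b\mid\vecx)$ using independence of the two sides once $(i,j)$ is paired, and then drops the factor $p(P_{i,j}\mid\vecx)\le 1$. Your energy-preserving injective gluing map, whose image is exactly the structures containing $(i,j)$, makes that factorization explicit, and $Q(\vecx)\ge Q(\vecx_a)Q(\vecx_b)$ is precisely the step $p(P_{i,j}\mid\vecx)\le 1$.
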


\begin{proof}
As a \finaledit{pseudoknot-free} structure can be decomposed recursively, it suffices to prove the binary splitting step.
Suppose a motif $\vecm$ is split into two submotifs $\vecm_a$ and $\vecm_b$ at a boundary base pair $(i,j)$, as illustrated in Fig.~\ref{fig:split}.
Let $P_{i,j}$ denote the event that bases $i$ and $j$ are paired.
By the chain rule of probability,
\begin{equation}
\begin{aligned}
&\;\quad  {p} ( \vecm \mid \vecx ) = {p} ( \vecm_a +  \vecm_b \mid \vecx ) \\
&=   {p} (P_{i,j} \mid \vecx) \times {p} ( \vecm_a \mid \vecx, P_{i,j} )  \times {p} ( \vecm_b \mid \vecx, P_{i,j}) \\
&=   {p} (P_{i,j} \mid \vecx) \times {p} ( \vecm_a \mid \vecx)  \times {p} ( \vecm_b \mid \vecx) \\
&\leq  p( \vecm_a \mid \vecx)  \times {p} ( \vecm_b \mid \vecx). \label{eq:product}
\end{aligned}
\end{equation}
\finaledit{In a pseudoknot-free structure, the boundary pair $(i, j)$ ensures that the two submotifs $\vecm_a$ and $\vecm_b$ are independent (no interactions between them) in the Boltzmann ensemble, which justifies the third line.
The fourth line holds because the boundary pairs are enforced in the motif-level folding.}
\end{proof}
\begin{corollary}
\begin{equation}
 \pbound(\ystar) \leq \prod_{\boldsymbol{m} \in \boldsymbol{M}} \pbound(\boldsymbol{m}).\label{eq:prob_decomposition}
\end{equation} \end{corollary}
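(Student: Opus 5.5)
The plan is to deduce the corollary directly from the probability decomposition theorem (Eq.~\ref{eq:prob_decomposition}) by taking a maximum over sequences, and then using the defining property of $\pbound$ for each motif. The argument has three steps: a pointwise bound, a product bound, and a maximization.

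\emph{Pointwise bound.} Fix an arbitrary sequence $\vecx \in \mathcal{X}(\ystar)$. By the theorem,
\[
p(\ystar \mid \vecx) \le \prod_{\vecm \in \boldsymbol{M}} p(\vecm \mid \vecx).
\]
Each motif $\vecm \in \boldsymbol{M}$ is a sub-motif of $\ystar$ that keeps its boundary pairs. Hence any $\vecx$ compatible with $\ystar$ restricts (by projection onto the positions of $\vecm$) to a (partial) sequence in the design space of $\vecm$. I would check here that $p(\vecm \mid \vecx)$ means the motif-level probability of Eq.~\ref{eq:mprob}, computed on that restriction, and that it is exactly the quantity $\pbound(\vecm)$ was defined to bound. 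Then $p(\vecm \mid \vecx) \le \max_{\vecx'} p(\vecm \mid \vecx') \le \pbound(\vecm)$.

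\emph{Product bound.} All factors are nonnegative, so the termwise inequalities multiply. This gives
\[
p(\ystar \mid \vecx) \le \prod_{\vecm \in \boldsymbol{M}} \pbound(\vecm)
\]
for every $\vecx \in \mathcal{X}(\ystar)$.

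\emph{Maximization.} The right-hand side does not depend on $\vecx$, so taking $\max_{\vecx}$ on the left yields $\max_{\vecx} p(\ystar \mid \vecx) \le \prod_{\vecm} \pbound(\vecm)$. Sequences outside $\mathcal{X}(\ystar)$ give probability $0$ and do not affect the maximum.

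It remains to interpret $\pbound(\ystar)$. The cleanest reading is that the product $\prod_{\vecm} \pbound(\vecm)$ is itself a valid upper bound for $\ystar$, so one may \emph{take} $\pbound(\ystar)$ to be the tightest such bound, namely the minimum over valid bounds, which includes this product. Alternatively, if $\pbound(\ystar)$ denotes the exact optimum $\max_{\vecx} p(\ystar \mid \vecx)$, the displayed inequality is literally what was shown.

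The main obstacle is this definitional step rather than any computation. Read literally, ``$\pbound(\ystar) \le \cdots$'' holds for an arbitrary upper bound only if $\pbound(\ystar)$ is chosen as the best available bound. I would therefore state explicitly that $\pbound(\ystar)$ is taken as the minimum of the candidate bounds (ensemble-approximation bounds and decomposition products). I would also confirm that the restriction of $\vecx$ to each motif lies in that motif's design space, so that $\pbound(\vecm)$ applies.
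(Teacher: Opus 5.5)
Your argument is correct and follows the paper's approach. The paper gives no separate proof and treats the corollary as immediate from the decomposition theorem. The implicit step is yours: bound each factor $p(\vecm \mid \vecx)$ by $\pbound(\vecm)$ and then maximize over $\vecx$, since $\max_{\vecx}\prod_{\vecm} p(\vecm \mid \vecx) \le \prod_{\vecm}\max_{\vecx} p(\vecm \mid \vecx)$. Your reading of $\pbound(\ystar)$ as the tightest valid bound also matches how the paper uses it in its subsequent minimization over decompositions.
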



This \finaledit{corollary} suggests that we can tighten the probability upper bound for a structure by exploiting probability bounds of smaller local motifs.
For sufficiently small motifs, we can obtain an exact bound by enumerating all compatible (partial) sequences and folding them under the motif constraints.
For motifs that are too large for exhaustive enumeration, we instead apply the motif ensemble approximation with rival motifs developed in the previous section.

\begin{figure}[t]
\centering
\begin{minipage}{0.32\linewidth}
  \centering
  \pdftooltip{\includegraphics[width=\linewidth]{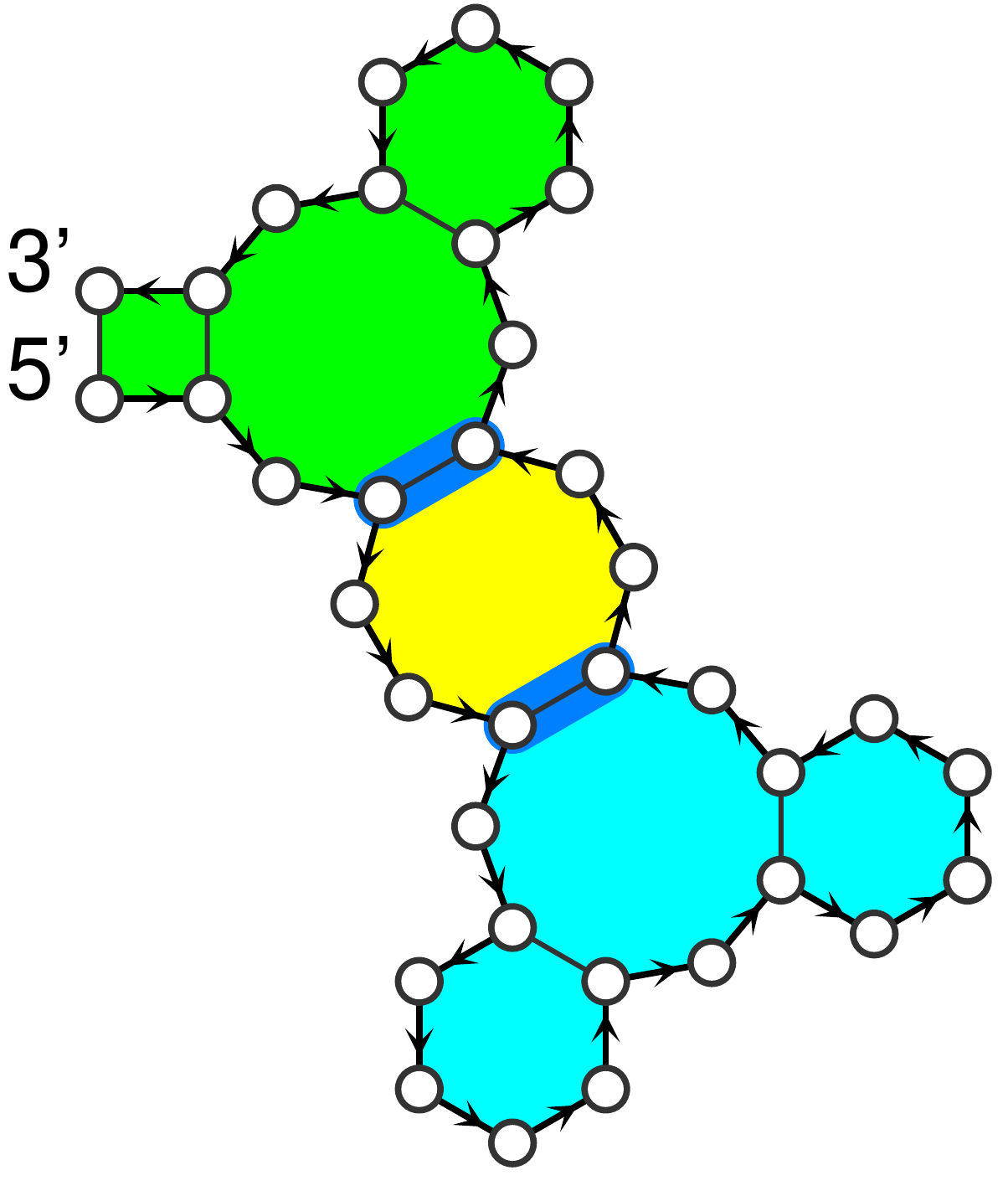}}{First decomposition of the multilooping fun structure into color-highlighted non-overlapping motifs.}
\end{minipage}\hfill
\begin{minipage}{0.32\linewidth}
  \centering
  \pdftooltip{\includegraphics[width=\linewidth]{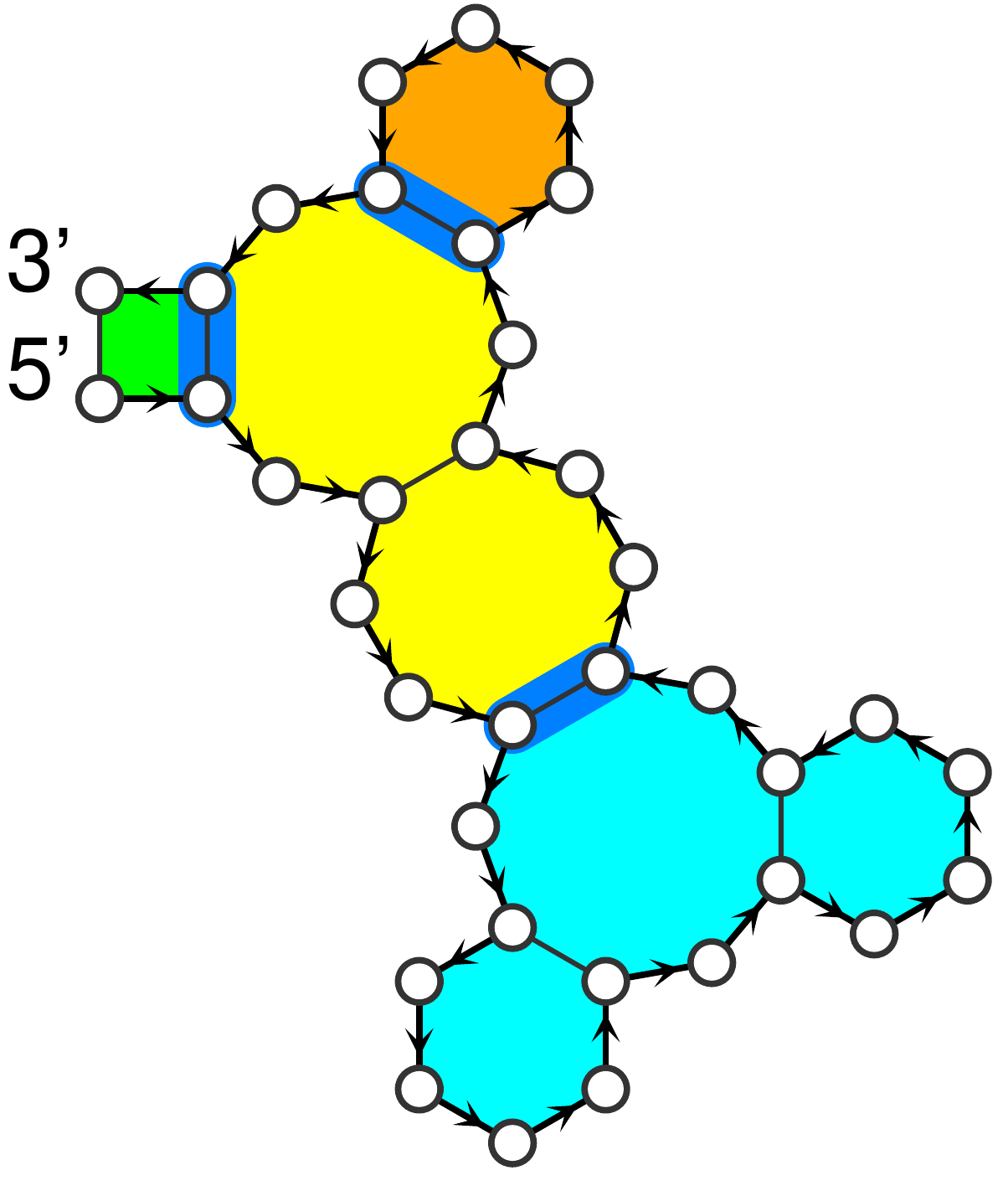}}{Second decomposition of the multilooping fun structure into a different set of color-highlighted non-overlapping motifs.}
\end{minipage}
\begin{minipage}{0.32\linewidth}
  \centering
  \pdftooltip{\includegraphics[width=\linewidth]{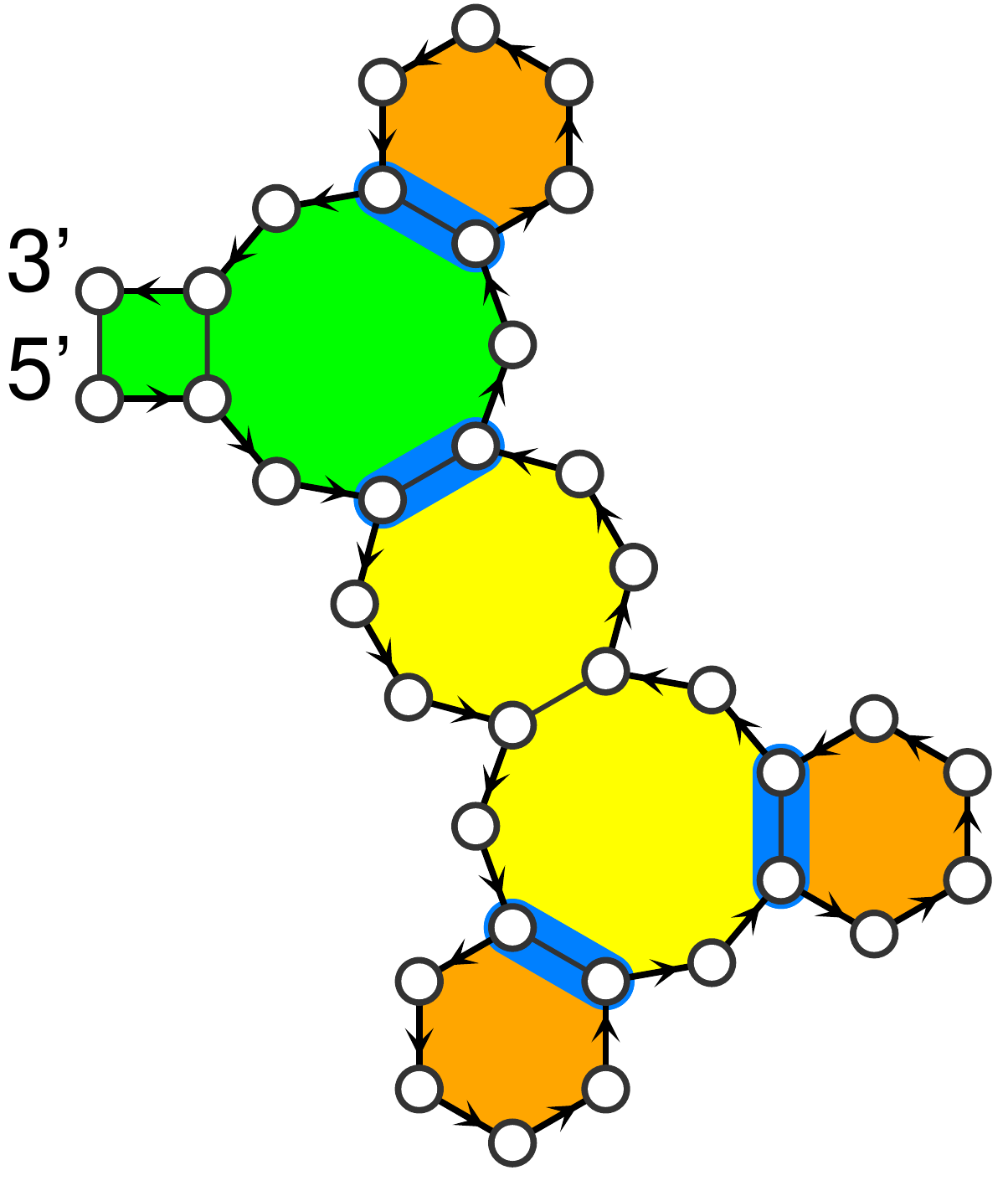}}{Third decomposition of the multilooping fun structure into another set of color-highlighted non-overlapping motifs.}
\end{minipage}
\caption{Three different decompositions for the same structure for the same structure shown in Fig.~\ref{fig:decomposition_eterna57}. \finaledit{Within each decomposition, different motifs are highlighted in different colors.}}
\label{fig:threedecompositions}
\end{figure}


Figure~\ref{fig:threedecompositions} shows three alternative decompositions of the same Eterna100 structure from Fig.~\ref{fig:decomposition_eterna57}.
In fact, a given  structure  admits exponentially many decompositions, because each base pair may or may not be selected as a splitting boundary. This yields the following \finaledit{Theorem}.

\begin{theorem}
\label{thm:total_decompose}
The total number of distinct decompositions of a secondary structure $\vecy$ is $2^{|\pairs(\vecy)|}$.
\end{theorem}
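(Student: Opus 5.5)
We prove the count by a bijection between decompositions of $\vecy$ and subsets of $\pairs(\vecy)$, whence there are exactly $2^{|\pairs(\vecy)|}$ decompositions. The convention, following the splitting step in the proof of the probability decomposition theorem (Fig.~\ref{fig:split}), is that a decomposition is determined by which base pairs are used as splitting boundaries. A boundary pair $(i,j)$ is shared: it is the enclosing pair of the inner motif and an inner boundary pair of the outer motif. The loops of $\vecy$ are the nodes of its loop tree, rooted at the external loop. Each base pair $(i,j)$ corresponds to exactly one tree edge, namely the edge between the loop it closes (below) and the loop in which it is an inner pair (above). This correspondence between pairs and edges is a bijection, since every loop other than the external loop is closed by exactly one pair.

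\emph{Forward map.} Given a subset $S\subseteq\pairs(\vecy)$, cut the loop tree at the edges corresponding to $S$. The resulting connected components are contiguous sets of loops, so by the definition of motif (Supp.~Sec.~\ref{sec:motif}) each component is a motif. These motifs cover every loop exactly once, and adjacent components meet only at a pair in $S$. They are therefore non-overlapping in the sense of the probability decomposition theorem, and together they give a decomposition $D(S)$.

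\emph{Inverse map.} Conversely, given a decomposition $\boldsymbol{M}$ into non-overlapping motifs, let $S(\boldsymbol{M})$ be the set of pairs that are boundary pairs between two distinct motifs. Equivalently, these are the pairs whose two incident loops lie in different motifs. Each motif is a connected set of loops, so removing the edges of $S(\boldsymbol{M})$ leaves exactly the motifs of $\boldsymbol{M}$ as components, giving $D(S(\boldsymbol{M}))=\boldsymbol{M}$. In the other direction, cutting at $S$ produces components whose cross-component pairs are exactly $S$, giving $S(D(S))=S$. The two maps are mutually inverse, which proves the theorem.

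The main obstacle is making precise what makes two decompositions ``distinct.'' We need distinct subsets $S$ to yield distinct motif sets, and every decomposition to arise from cuts at base pairs only. The first holds because different edge cuts of a tree give different partitions of its nodes. The second holds because motifs are by definition unions of whole loops (stacks included), so a split can never occur inside a loop. Two extreme cases serve as sanity checks. The empty set $S=\emptyset$ gives the trivial decomposition $\{\vecy\}$. Taking $S=\pairs(\vecy)$ gives the finest decomposition, into single loops.
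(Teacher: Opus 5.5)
Your proof is correct and follows the same approach as the paper, whose only justification is the one-line remark that each base pair may or may not be selected as a splitting boundary. Your bijection makes that remark rigorous: you identify base pairs with edges of the loop tree, so that subsets of $\pairs(\vecy)$ correspond to partitions of the loops into connected components. You also supply the injectivity and surjectivity checks that the paper leaves implicit.
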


Let $\mathcal{M}$ denote the set of decompositions that can be evaluated efficiently for a target structure $\ystar$. The best probability bound obtainable via decomposition is then
\begin{equation}
\pbound(\ystar) \leq \min_{\boldsymbol{M} \in \mathcal{M}}  \prod_{\boldsymbol{m} \in \boldsymbol{M}} \pbound(\boldsymbol{m}).\label{eq:min_product}
\end{equation}

\vspace{-0.6cm}
\subsection{Optimal Decomposition Search by Linear-Time Dynamic Programming }

\begin{figure}[t]
\begin{minipage}[c]{0.52\linewidth}
  \centering
  \pdftooltip{\includegraphics[width=\linewidth]{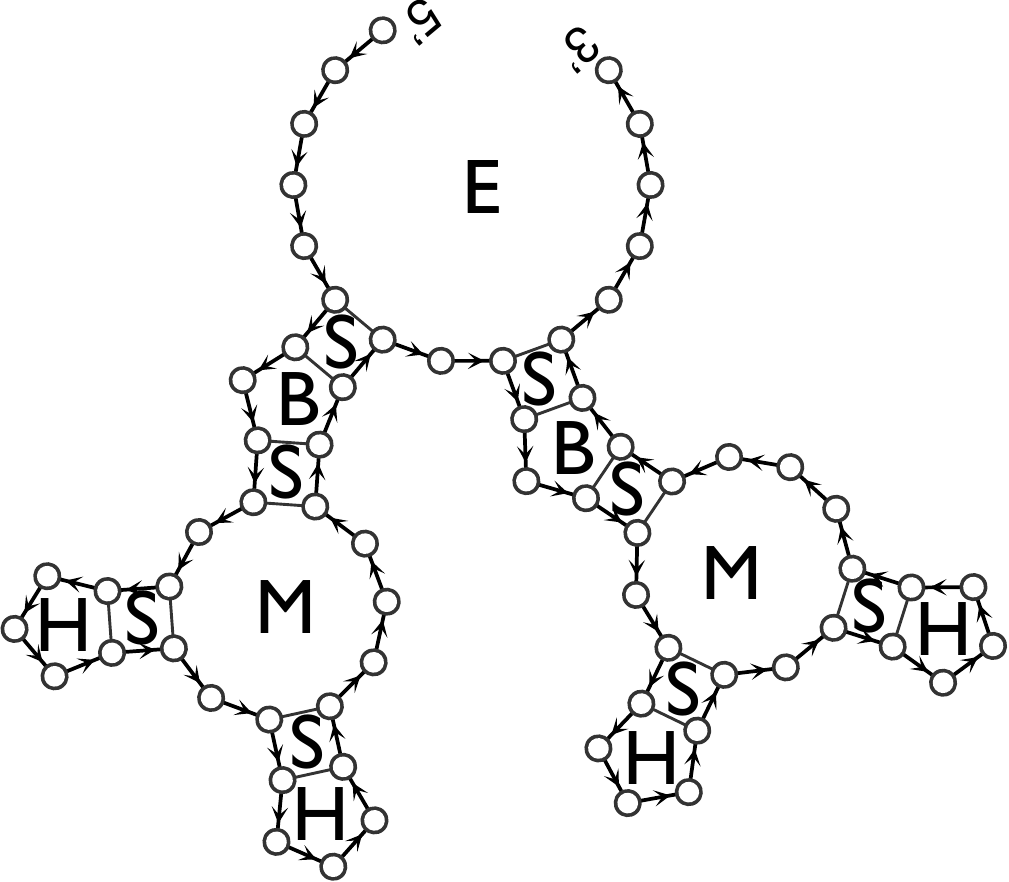}}{Diagram of the Eterna100 RNA secondary structure named Chicken Feet, showing its loop tree pat and characteristic symmetric branching pattern.}
\end{minipage}
\; 
\begin{minipage}[c]{0.45\linewidth}
\begin{lrbox}{\looptreebox}
\scalebox{0.7}{
\begin{tikzpicture}[
  >=Latex,
  node distance=2.5mm and 4.5mm,
  loop/.style={circle, draw=black, very thick, minimum size=2.5mm, inner sep=0pt,
               font=\bfseries\large, align=center},
  edge/.style={->, line width=1.1pt}
]
\node[loop] (E) {E};

\node[loop, below left=3mm and 7mm of E] (LS1) {S};
\node[loop, below=3.5mm of LS1]            (LB)  {B};
\node[loop, below=3.5mm of LB]             (LS2) {S};
\node[loop, below=3.5mm of LS2]            (LM)  {M};

\node[loop, below right=3mm and 7mm of E] (RS1) {S};
\node[loop, below=3.5mm of RS1]              (RB)  {B};
\node[loop, below=3.5mm of RB]               (RS2) {S};
\node[loop, below=3.5mm of RS2]              (RM)  {M};

\node[loop, below left=3.5mm and 3.5mm of LM] (LMS1) {S};
\node[loop, below right=3.5mm and 3.5mm of LM](LMS2) {S};
\node[loop, below=3.5mm of LMS1]             (LH1)  {H};
\node[loop, below=3.5mm of LMS2]             (LH2)  {H};

\node[loop, below left=3.5mm and 3.5mm of RM] (RMS1) {S};
\node[loop, below right=3.5mm and 3.5mm of RM](RMS2) {S};
\node[loop, below=3.5mm of RMS1]             (RH1)  {H};
\node[loop, below=3.5mm of RMS2]             (RH2)  {H};

\draw[edge] (E) -- (LS1);
\draw[edge] (E) -- (RS1);

\draw[edge] (LS1) -- (LB);
\draw[edge] (LB)  -- (LS2);
\draw[edge] (LS2) -- (LM);

\draw[edge] (RS1) -- (RB);
\draw[edge] (RB)  -- (RS2);
\draw[edge] (RS2) -- (RM);

\draw[edge] (LM) -- (LMS1);
\draw[edge] (LM) -- (LMS2);
\draw[edge] (LMS1) -- (LH1);
\draw[edge] (LMS2) -- (LH2);

\draw[edge] (RM) -- (RMS1);
\draw[edge] (RM) -- (RMS2);
\draw[edge] (RMS1) -- (RH1);
\draw[edge] (RMS2) -- (RH2);
\end{tikzpicture}
}
\end{lrbox}
\pdftooltip{\usebox{\looptreebox}}{Loop tree diagram for the Chicken Feet structure, rooted at an external loop (E), with two symmetric branches each containing a stack (S), bulge (B), stack (S), and multiloop (M), where each multiloop further splits into two paths each ending in a stack (S) and hairpin (H).}
\end{minipage}
\caption{Left: the Eterna100 structure \emph{Chicken Feet}. Right: corresponding loop tree where each loop (including stacking) is represented as a node. E, S, B, H, M represent different loop types.}
\label{fig:loop_tree}
\end{figure}

\begin{figure}[b]
\vspace{-0.5cm}
  \centering
  \pdftooltip{\includegraphics[width=\linewidth]{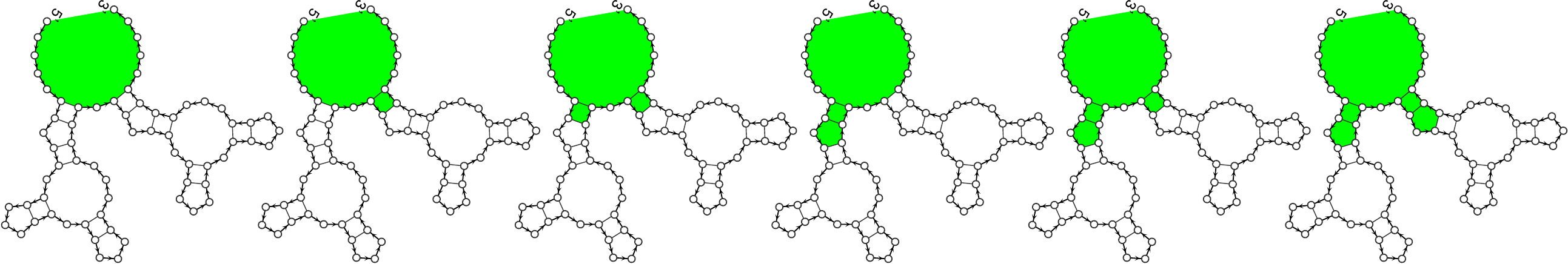}}{Grid of examples showing motif candidates generated at the root (external) loop node, with varying depth, width, and number of loops, where each candidate motif is highlighted on the RNA structure.}
  \caption{Examples of motif candidates generated at the root node. For demonstration, here the maximum values for \emph{depth}, \emph{width}, and \emph{number of loops} are set as 3, 2, and 5, respectively, and 6 out of 9 cases are shown. In  experiments the maximum \emph{depth} is set as 5.}\
  \label{fig:motif_cands}
\end{figure}


Although enumerating all decompositions is intractable, exploring more decompositions can only tighten the resulting upper bound. The key question is therefore how to search a large space of decompositions efficiently.


Inspired by dynamic programming techniques from compilers~\cite{aho1972theory,aho1976optimal} and machine translation~\cite{huang2006statistical}, we propose a linear-time memoized top-down algorithm (Algorithm~\ref{alg:topdown-dp}) that explores many decompositions while remaining computationally efficient.
We represent a target structure by a fixed loop tree $\tau$, where each node corresponds to a loop in the structure (Fig.~\ref{fig:loop_tree}).
We decompose the structure by traversing $\tau$ top-down, starting at the root. 
When visiting a loop node $\eta$, we are solving the subproblem for the subtree $\tau_{\eta}$, and we must choose a motif $\vecm_\eta$ that contains $\eta$ 
\finaledit{and some of its descendants}.
To avoid enumerating all possibilities, we constrain candidate motifs using three parameters: \emph{depth}, \emph{width}, and \emph{number of loops}.
Figure~\ref{fig:motif_cands} shows examples of motif candidates generated when visiting the external loop at the root; the candidate-generation procedure is described in Algorithm~\ref{alg:motif_gen} (in Supplementary Section~\ref{sec:alg}).
Once $\vecm_\eta$ is selected, we recursively process each adjacent descendant loop node $\eta_i$ in the  subtree $\tau_\eta$. 
This procedure yields the following dynamic programming recurrence:
\begin{equation}
best(\tau_\eta) = \hspace{-0.5cm}\min_{\vecm_\eta \in \Call{MotifGen}{\eta}} \pbound(\vecm_\eta) \hspace{-0.7cm}\prod_{\eta_i \in \mathit{descendents}(\vecm_\eta)} \hspace{-0.5cm}best(\tau_{\eta_i}),\label{eq:recurrence}
\end{equation}
where $best(\tau_\eta)$ represents the tightest upper bound for the substructure rooted at the loop node $\eta$.
The base case occurs when $\eta$ is a hairpin loop, for which $best(\tau_\eta)=1$.

Each loop node may admit many candidate motifs; the number of possible decompositions can also be calculated by dynamic programming recurrence:
\begin{equation}
count(\tau_\eta) = \hspace{-0.5cm}\sum_{\vecm_\eta \in \Call{MotifGen}{\eta}} \prod_{\eta_i \in \mathit{descendents}(\vecm_\eta)} \hspace{-0.5cm}count(\tau_{\eta_i}),\label{eq:count}
\end{equation}
with the base case value of 1 \finaledit{(leaf nodes)}.
As a result, the decomposition possibilities grows exponentially with the size of the loop tree. 
We address this via memoization: each subtree rooted at a loop node is solved at most once and cached.
We also store backpointers to recover the best motif choice at each node, and reconstruct the optimal decomposition by backtracking.
The resulting dynamic program runs in $O(\alpha\beta)$ time, where $\alpha$ is the number of loop nodes in the loop tree and $\beta$ is the maximum number of motif candidates considered at any node.

\begin{algorithm}[t]
\caption{Top-down Memoized Decomposition}
\label{alg:topdown-dp}
\begin{algorithmic}[1]
\Function{Decompose}{$\eta$}
  \If{$cache[\eta]$ defined}
    \State \Return $cache[\eta]$
  \EndIf
  \State $best \gets 1$
  \State $candidates \gets \Call{MotifGen}{\eta, depth, width, max\_loop}$
  \For{$\vecm_{\eta} \in candidates$} \Comment{try each candidate $\vecm_{\eta}$}
      \State $product \gets \mathit{\pbound}(\vecm_{\eta})$\Comment{exact or approx.}\label{line:mprob}
      \State $descendants \gets $ {loop nodes adjacent to \vecm} in $\tau_{\eta}$
      \For{$\eta_i \in descendants$}
        \State $(p_i, \vecm_{\eta_i}^{\text{best}}) \gets \Call{Decompose}{\eta_i}$ \Comment{recursion}
        \State $product \gets product \cdot p_i$
      \EndFor
      \If{$product < best$}
        \State $best \gets product$
        \State $\vecm_{\eta}^{\text{best}} \gets \vecm_{\eta}$ \Comment{plug in the results}
      \EndIf
  \EndFor
  \State $cache[\eta] \gets (best, \vecm_{\eta}^{\text{best}})$ \Comment{memoization}
  \State \Return $cache[\eta]$ \Comment{optimal decomposition choice}
\EndFunction
\end{algorithmic}
\end{algorithm}

\vspace{-0.3cm}

\section{Evaluation Results}\label{sec5}

\subsection{Implementation}
\finaledit{Our work}, \emph{\finaledit{LinearDecompose}}, consists of two core components:
(i) {Linear-Time Dynamic Decomposition} (Alg.~\ref{alg:topdown-dp}) and
(ii) {Ensemble Approximation} (Alg.~\ref{alg:approxe_2}).
The \finaledit{first} 
module efficiently searches for the optimal motif decomposition by the recurrence formula Eq.~\ref{eq:recurrence} over exponentially many candidates.
The \finaledit{second} 
module estimates probability bounds $\pbound(\ystar)$ or $\pbound(\mstar)$ for a target structure or motif.

\finaledit{Our C++ code runs} on Linux machines with 4.0\,GHz CPUs and 32\,GB memory.
It calls ViennaRNA (v2.7) folding energy
engine~\cite{lorenz+:2011} \finaledit{with Turner 2004 parameters~\cite{Mathews+:2004}}.
Ensemble approximation is parallelized with OpenMP over 90  cores.

During decomposition search, Algorithm~\ref{alg:topdown-dp} repeatedly queries motif probability bounds $\pbound(\vecm_\eta)$ (line~\ref{line:mprob}).
To improve efficiency, we precompute and cache motif bounds using a hybrid strategy:
(1) exact bounds via exhaustive enumeration for motifs of length \finaledit{up to 14}, 
following CountingDesign~\cite{yao:2021thesis}; and
(2) non-exact bounds via ensemble approximation for larger motifs.

\subsection{Datasets and Baselines}
We evaluate 
on two popular RNA structure datasets:
\begin{enumerate}
\item \finaledit{\textbf{ArchiveII}}~\citep{sloma+mathews:2016,Cannone+:2002} contains native RNA secondary structures spanning 10 families of naturally  RNAs, including tRNA, rRNA, etc. We remove pseudoknotted structures and those incompatible with ViennaRNA's default loop constraints. We \finaledit{only use} structures \finaledit{longer than $200$\,{\em nt}}, resulting in 1,144 structures.
\item \textbf{Eterna100}~\citep{anderson+:2016} consists of 100 secondary structures \finaledit{of varying design difficulty} designed by human players of Eterna and is widely used for evaluating RNA design and designability.
\end{enumerate}
We compare \finaledit{our method} against CountingDesign and \finaledit{our enhanced version of it},  \finaledit{CountingDesign+}:
\begin{enumerate}
\item \textbf{CountingDesign}~\cite{yao+:2019,yao:2021thesis} can identify exact probability bounds for very short motifs (up to 14 nucleotides) by exhaustively enumerating RNA sequences for each motif. However, the method is not scalable and lacks explainability.
\item \textbf{CountingDesign+}. We extend CountingDesign to include motifs with external loops, 
\finaledit{thus expanding their  motif definition to align with ours}.
\end{enumerate}

\vspace{-0.3cm}

\subsection{Overall \finaledit{and Individual Results}}
\begin{table}[t]
\caption{Average probability bounds on each dataset.}
\label{tab:overall}
\resizebox{\columnwidth}{!}{%
\begin{tabular}{l|lrr}
\hline
\multicolumn{2}{l}{Dataset}                   & ArchiveII & Eterna100 \\ \hline
\multicolumn{2}{l}{\# of unique structures}      & 1144     & 100     \\ 
\multicolumn{2}{l}{best achieved $p(\ystar | \vecx)\finaledit{\uparrow}$}  
  & 0.766     & 0.599     \\ \hline
\midrule
\multirow{5}{*}{\rotatebox{90}{\finaledit{$\pbound(\ystar)\finaledit{\downarrow}$}}} &
CountingDesign     &  \finaledit{0.896\tiny$\pm 0.242$}           & \finaledit{0.898\tiny$\pm 0.234$} \\ 
&CountingDesign+     & 0.875\finaledit{\tiny$\pm 0.255$}     & 0.876\finaledit{\tiny$\pm 0.250$}     \\ 
\cline{2-4} 
&\finaledit{LinearDecompose} & \textbf{0.800}\finaledit{\tiny$\pm 0.288$}    & \textbf{0.741}\finaledit{\tiny$\pm 0.318$}     \\ 
&\editcolor no ensemble approx.\hspace{-1.5cm} & \editcolor 0.866\tiny$\pm 0.256$ & \editcolor 0.839\tiny$\pm 0.275$ \\
&\editcolor no dynamic decomp.\hspace{-1.5cm}  & \editcolor 0.819\tiny$\pm 0.286$ & \editcolor 0.787\tiny$\pm 0.295$ \\ \hline
\end{tabular}}
\vspace{-0.5cm}
\end{table}

Table~\ref{tab:overall} reports average probability bounds on both datasets.
For reference, we also include the highest $p(\ystar|\vecx)$ achieved by \finaledit{the state-of-the-art RNA design methods, taking the better of SAMFEO~\cite{zhou+:2023samfeo} and SamplingDesign~\cite{tang+:2026} for each puzzle.}

{\finaledit{LinearDecompose}} consistently yields the tightest bounds, improving over CountingDesign+ by 0.075 on ArchiveII and 0.135 on Eterna100.
This improvement is expected, as {\finaledit{LinearDecompose}} is theoretically guaranteed to produce bounds no worse than CountingDesign+.

We also observe a clear correlation between probability bounds and empirical designability.
ArchiveII structures are generally more designable: the average bound (0.800) is close to the achieved probability (0.766).
In contrast, Eterna100 structures are less designable, with a larger gap between the achieved probability (0.599) and the bound (0.741).
\finaledit{This difference is statistically significant (Mann-Whitney U\,=\,72{,}687, $p = 6.95 \times 10^{-6}$, $r = 0.27$).}

\begin{figure}[t]
  \centering
  \pdftooltip{%
  \includegraphics[width=\linewidth]{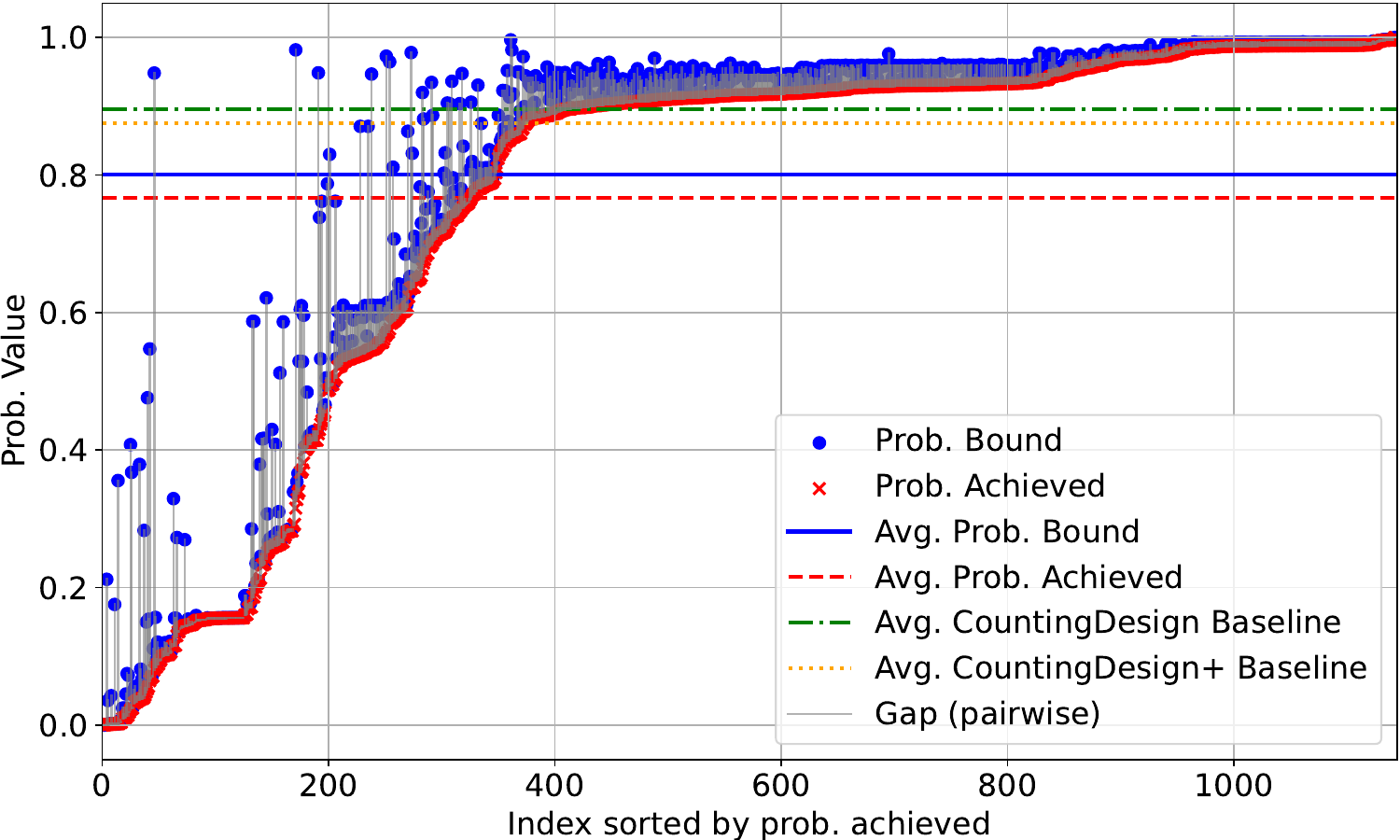}%
}{Scatter plot of probability bounds vs.\ achieved probabilities for 1144 ArchiveII structures. LinearDecompose yields tighter bounds than baselines.}
  \caption{Probability bounds vs.~achieved $p(\ystar\mid\vecx)$ on ArchiveII.}\
  \label{fig:bounds_archiveii}
%
  \centering
  \pdftooltip{%
  \includegraphics[width=\linewidth]{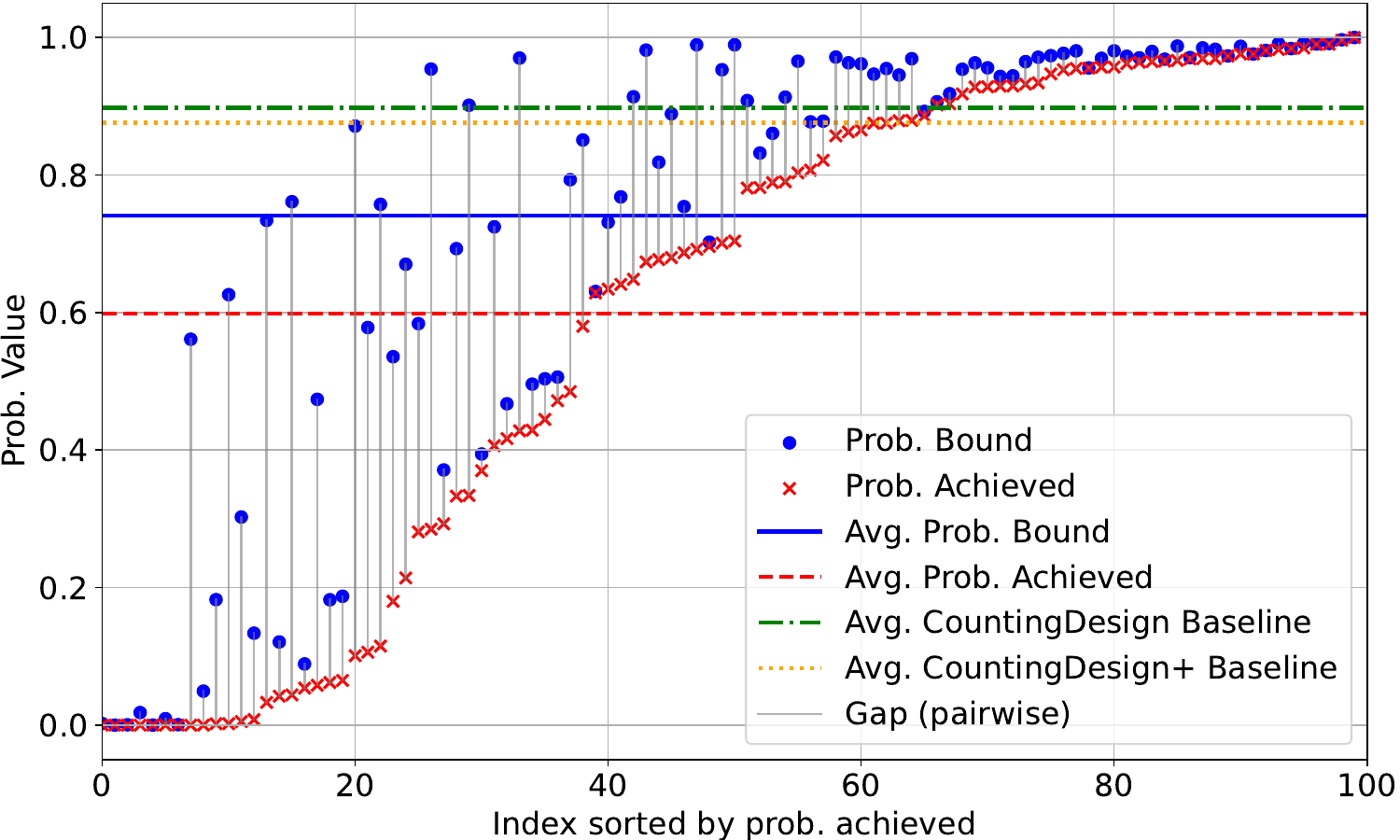}%
}{Scatter plot of probability bounds vs.\ achieved probabilities for 100 Eterna100 structures. LinearDecompose yields tighter bounds than baselines, with a larger gap between bounds and achieved probabilities compared to ArchiveII.}
  \caption{Probability bounds vs.~achieved $p(\ystar\mid\vecx)$ on Eterna100.}\
  \label{fig:bounds_eterna100}
  \vspace{-0.8cm}
\end{figure}

Figs.~\ref{fig:bounds_archiveii}--\ref{fig:bounds_eterna100} show per-structure bounds and best achieved $p(\ystar\,|\,\vecx)$.
Bounds tend to be tighter for both highly designable and highly undesignable structures.
Highly designable structures typically consist of motifs with uniformly high local probabilities, while highly undesignable structures often contain multiple motifs with very low bounds, \finaledit{often} leading to near-zero global bounds.

These bounds also provide a useful lens for evaluating RNA design results.
For example, a structure achieving probability 0.7 may be closer to its theoretical limit than another achieving 0.8 if its bound is substantially lower.

Table~\ref{tab:plots} presents a structural dissection of several Eterna100 puzzles previously shown to be \UMFE-undesignable~\cite{zhou+:2025scalable}.
For each structure $\ystar$, we report the number of decompositions explored by \finaledit{LinearDecompose}
 (via dynamic programming Eq.~\ref{eq:count}), 
\finaledit{the total number of possible decompositions (via Theorem~\ref{thm:total_decompose})},
the optimal decomposition achieving the tightest probability bound,  the resulting global probability upper bound $\pbound(\ystar)$,
\finaledit{and the best $p(\ystar\mid \vecx)$ achieved by SAMFEO and SamplingDesign (as a lowerbond)}.


\finaledit{LinearDecompose efficiently explores exponentially many decompositions per puzzle (e.g., 47M out of 67M total decompositions for the first puzzle)}, 
and consistently identifies a small set of local motifs whose combined bounds dominate the global probability bound.
Notably, structures with complex multiloop or bulge configurations (e.g., \emph{multilooping fun}) exhibit extremely low bounds, indicating that probabilistic undesignability can often be traced to a few highly constraining motifs.
\finaledit{In addition, we also noticed that motifs with isolated base pairs tend to have low probability bounds, consistent with our prior systematic study~\cite{zhou+:2026:jcb} on undesignable motifs that isolated base pairs are energetically unfavorable and often avoided in natural RNA structures.}
These results highlight the value of decomposition-based analysis for pinpointing the structural origins of low RNA designability.

\begin{table}[t]
\centering
\caption{Structure Dissection with Eterna100 Puzzles.}
\label{tab:plots}
\resizebox{1.0\columnwidth}{!}{%
\begin{tabular}{|p{1.55cm} | >{\raggedleft\arraybackslash}p{1.55cm} | c| >{\raggedleft\arraybackslash}p{1.4cm}|}
\hline
\multirow{2}{*}{Puzzle (\ystar)}
& \multicolumn{2}{c|}{Decomposition}
& \multirow{1}{*}{\!\!\small $\pbound(\ystar\!)$}\\ 
& \finaledit{Explored} & Optimal &  \!\!\finaledit{\small max $p(\ystar|\vecx)$}\!\!\\ 
\hline
\finaledit{Mat - Elements \& Sections} & \finaledit{$47,012,668$} \finaledit{out of $2^{26}=67,108,864$} & \raisebox{-.5\height}{\pdftooltip{\includegraphics[width=0.18\textwidth]{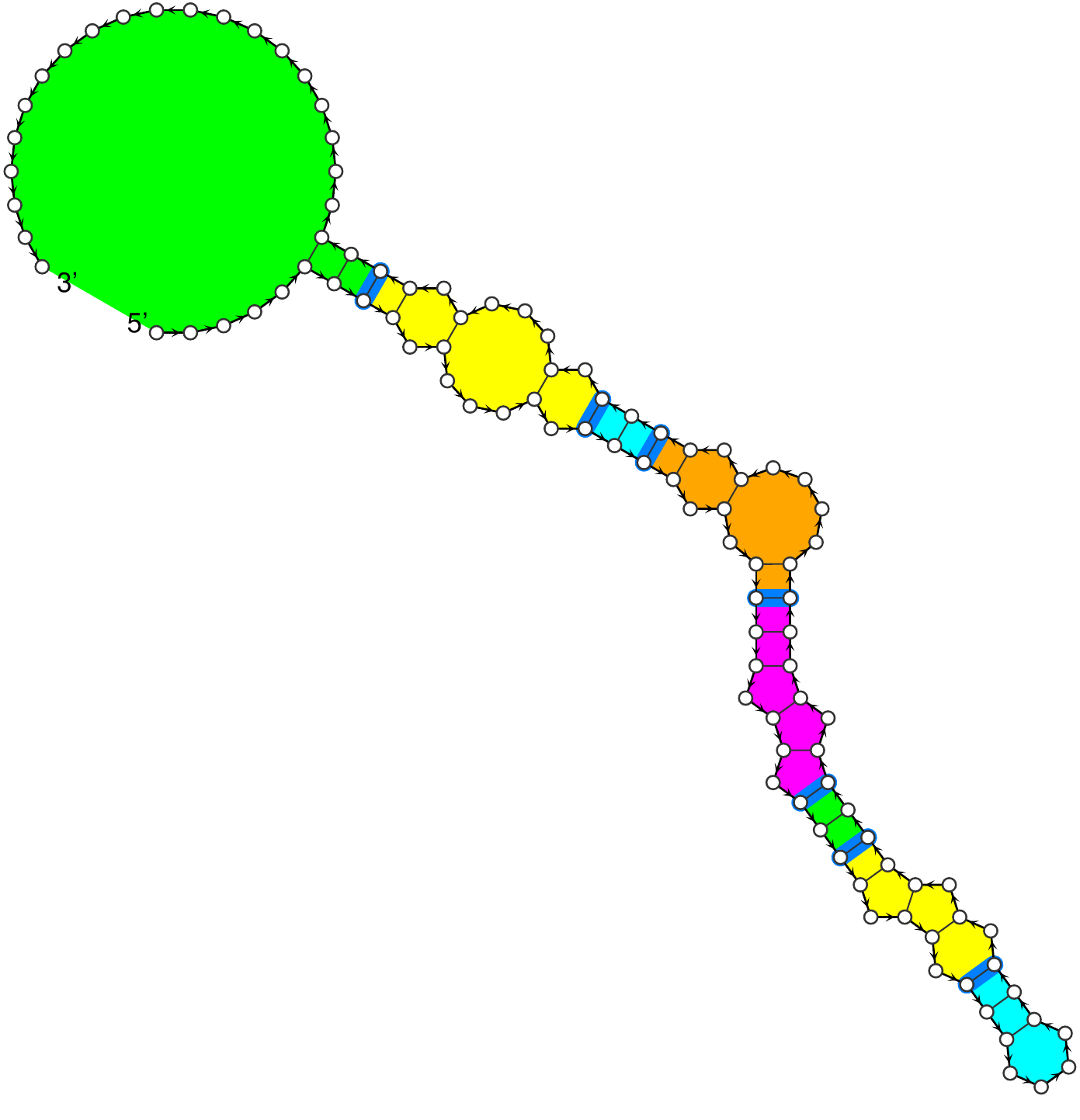}}{Optimal motif decomposition for the ``Mat - Elements \& Sections'' structure, \pbound 0.0889.}} & \finaledit{0.0889$\downarrow$}
\finaledit{0.0426$\uparrow$} \\ 
\hline
\finaledit{Loop next to a Multiloop} & \finaledit{$2,997,104$} \finaledit{out of $2^{22}$= $4,194,304$}& \raisebox{-.5\height}{\pdftooltip{\includegraphics[width=0.18\textwidth]{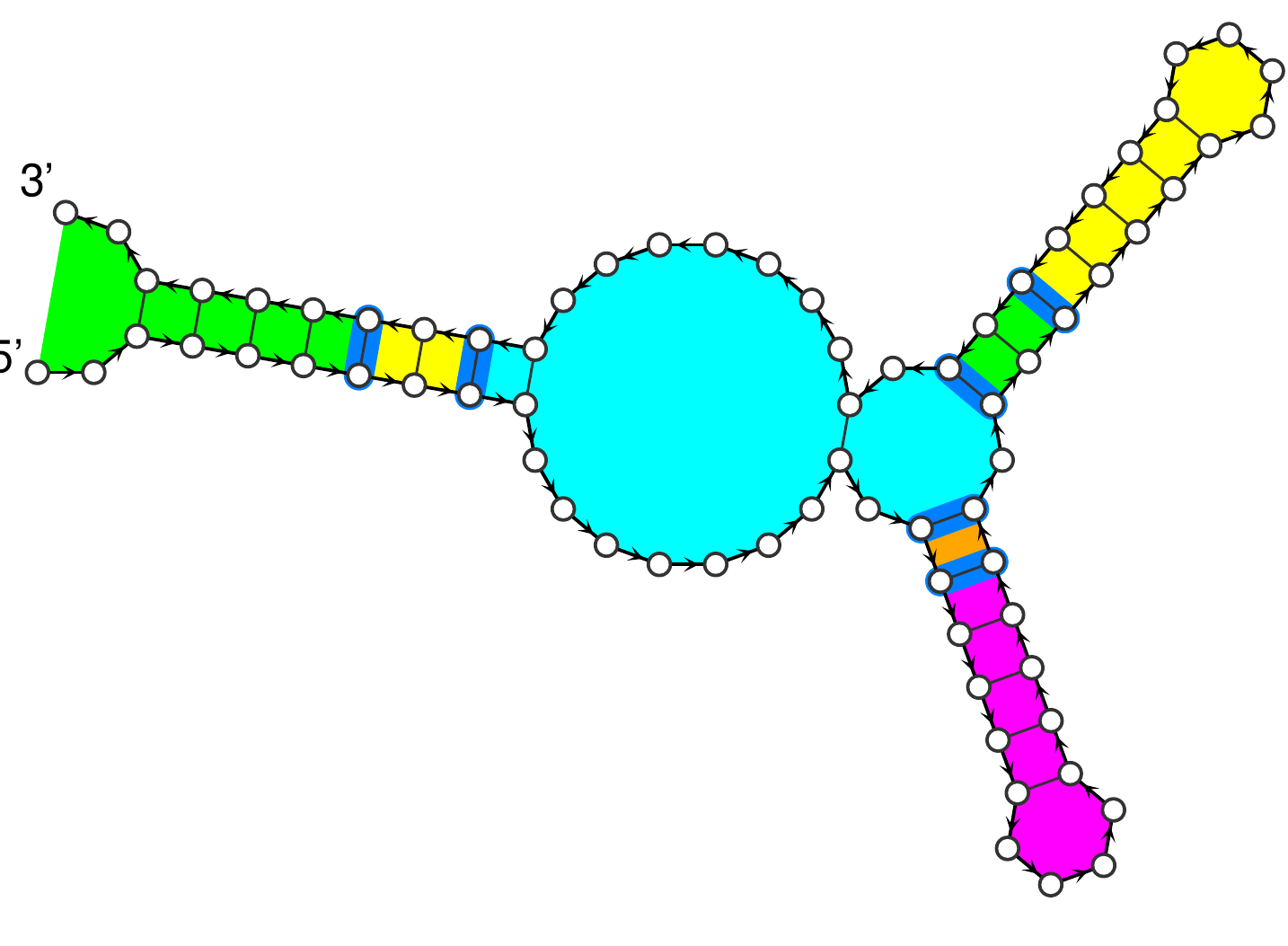}}{Optimal motif decomposition for the ``Loop next to a Multiloop'' structure, \pbound 0.4958.}} & \finaledit{0.4958$\downarrow$}
\finaledit{0.4293$\uparrow$} \\ 
\hline
\finaledit{Simple Single Bond} & \finaledit{32} \finaledit{out of $2^5=32$} & \raisebox{-.5\height}{\pdftooltip{\includegraphics[width=0.17\textwidth]{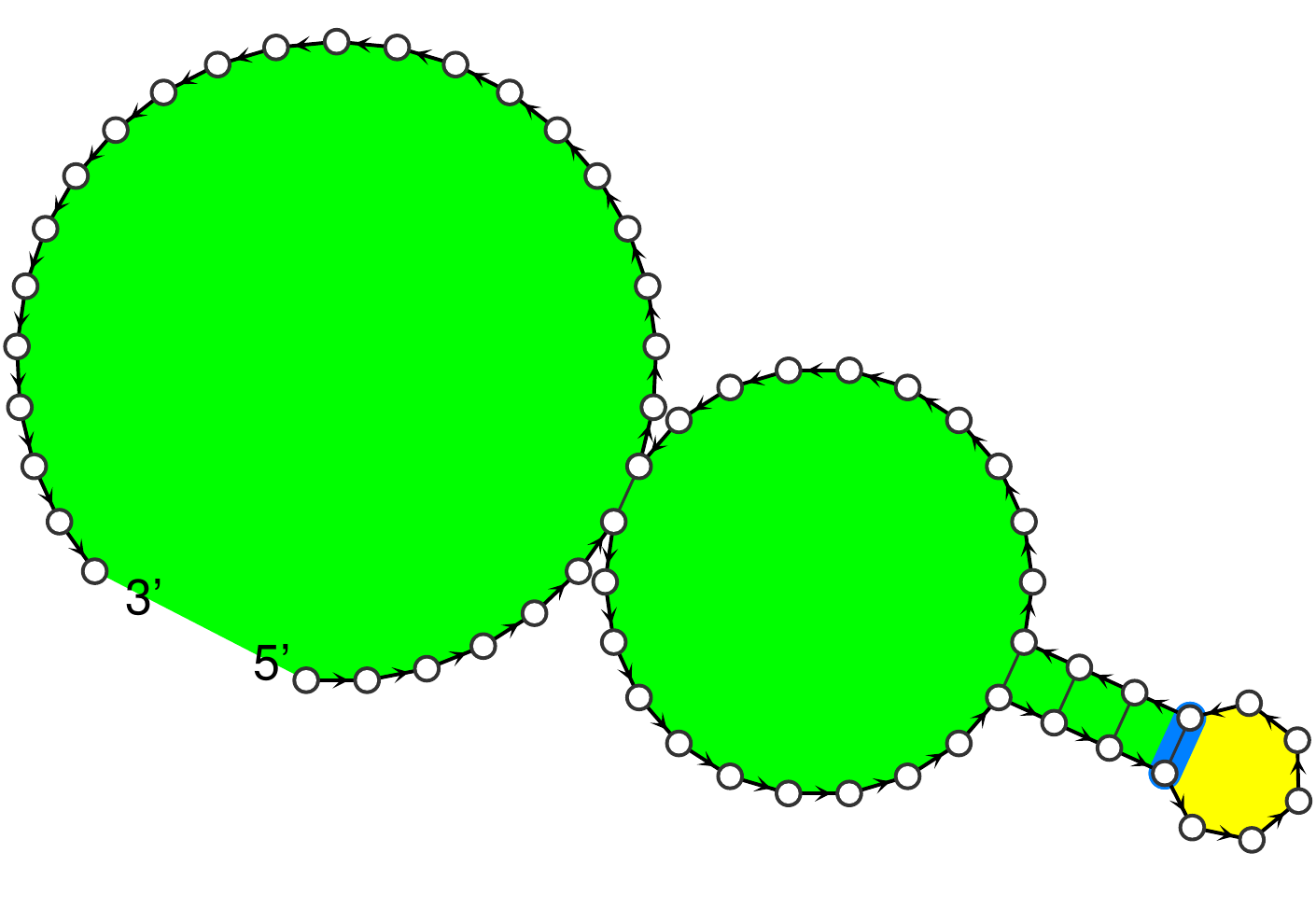}}{Optimal motif decomposition for the ``Simple Single Bond'' structure, \pbound 0.3942.}} & 
\finaledit{0.3942$\downarrow$}
\finaledit{0.3700$\uparrow$} \\ 
\hline
1, 2, 3 and 4 bulges & $2,734,333$ \finaledit{out of $2^{22}$=} \finaledit{$4,194,304$} & \raisebox{-.5\height}{\pdftooltip{\includegraphics[width=0.18\textwidth]{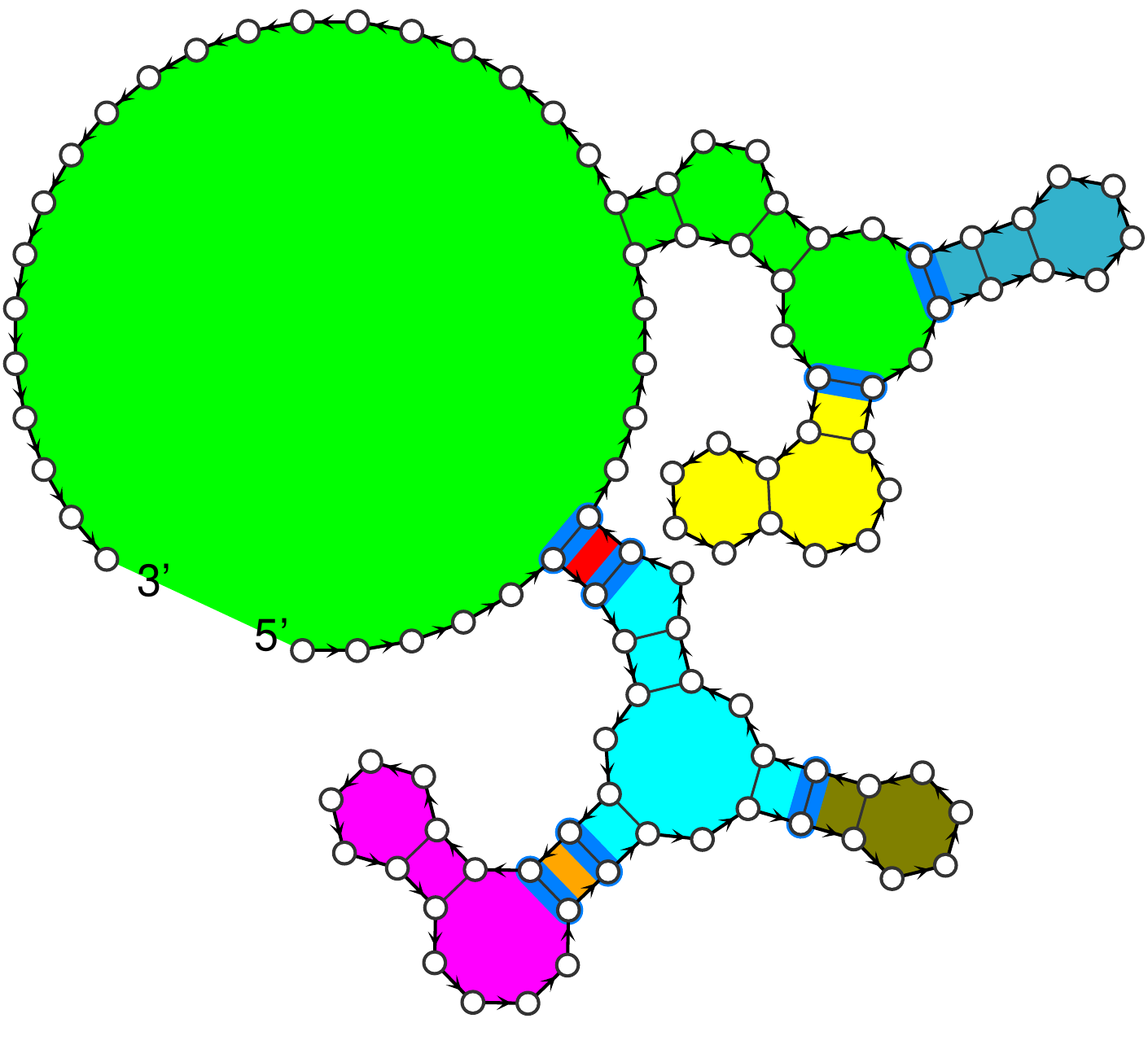}}{Optimal motif decomposition for the ''1-2-3-4 bulges'' structure, \pbound 0.3026.}} & 0.3026$\downarrow$ 
\finaledit{0.0054$\uparrow$} \\ \hline
Repetit. Seqs. 8/10 & $860$ \finaledit{out of $2^{10}\!=\!1,024$}& \raisebox{-.5\height}{\pdftooltip{\includegraphics[width=0.13\textwidth]{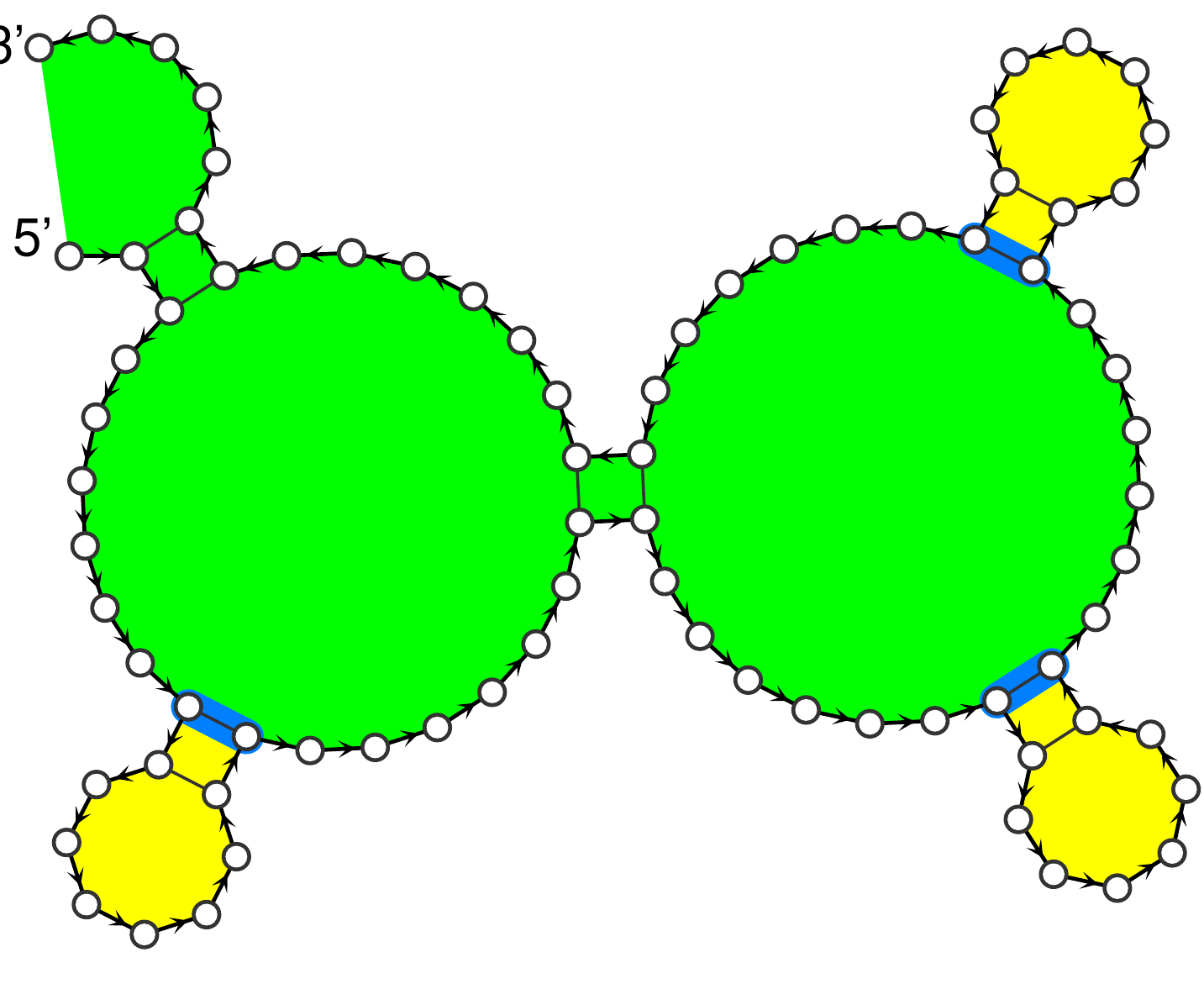}}{Optimal motif decomposition for the ''Repetitive Sequences 8/10'' structure, \pbound 0.1874.}} & 0.1874$\downarrow$ 
  \finaledit{0.0651$\uparrow$} \\ \hline
multilooping fun & $119$ \finaledit{out of $2^7=128$} & \raisebox{-.5\height}{\pdftooltip{\includegraphics[width=0.13\textwidth]{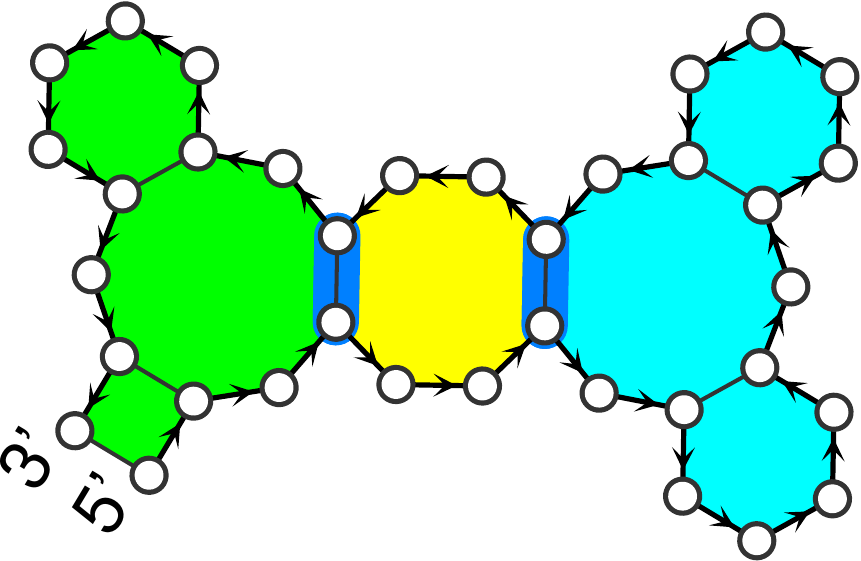}}{Optimal motif decomposition for the ''multilooping fun'' structure, \pbound 0.0006.}} & 0.0006$\downarrow$ 
  \finaledit{$2\!\times\! 10^{-6}$$\uparrow$} \\ \hline
Chicken feet & $52,114$ \finaledit{out of $2^{16}\!=\!65,536$}& \raisebox{1cm}{\pdftooltip{\includegraphics[width=0.15\textwidth,angle=-90]{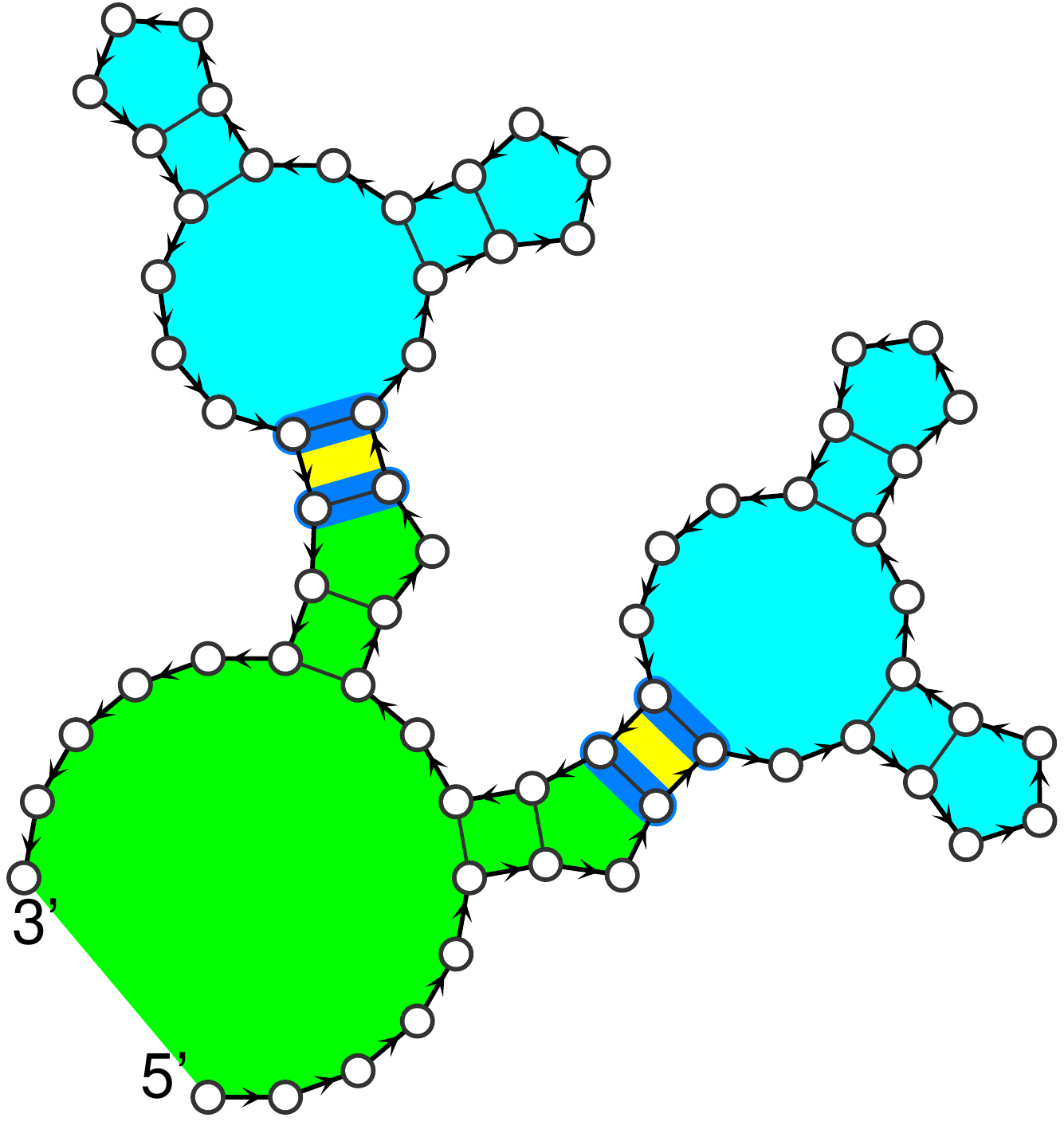}}{Optimal motif decomposition for the ''Chicken Feet'' structure, \pbound 0.1336.}} & 0.1336$\downarrow$ 
  \finaledit{0.0080$\uparrow$} \\ 
\hline
\end{tabular}
}
\end{table}

\subsection{Ablation Studies}
To assess the two core components of {\finaledit{LinearDecompose}}, we perform two ablations:
(1) disable motif ensemble approximation by setting approximate motif bounds to 1; and
(2) disable dynamic decomposition by using only the \finaledit{best} single-motif bound.
As shown in \finaledit{the last two rows of Table~\ref{tab:overall}}, both ablations lead to looser bounds, confirming the importance of both components.
Nevertheless, the ablated variants still outperform the baselines in Table~\ref{tab:overall}, demonstrating the robustness of the framework.


\subsection{Efficiency}
The runtime of \finaledit{LinearDecompose} consists of decomposition search and motif ensemble approximation.
Decomposition search is fast, averaging 0.01\,s per structure on ArchiveII and 0.9\,s on Eterna100.
Ensemble approximation estimates bounds for \finaledit{15{,}800} and \finaledit{8{,}000} unique motifs on ArchiveII and Eterna100, with an average cost of 7.6\,s and 10.7\,s per motif, respectively.
\finaledit{Approximately 10\% and 2\% of motifs on ArchiveII and Eterna100, respectively, are skipped due to exceeding the enumeration time limit, for which the bound defaults to 1.
As these fractions are small, we expect the impact on overall bound quality to be minimal.}
\finaledit{The average time costs for ensemble approximation per structure are 106\,s (ArchiveII) and 1039\,s (Eterna100).
The total runtimes are 106\,s and 1040\,s per structure, respectively.
Note that each motif bound can be cached once computed. As a result, the amortized runtime of \finaledit{LinearDecompose} can be significantly reduced when analyzing multiple structures sharing common motifs, which is often the case in practice.}

\vspace{-0.3cm}
\section{Discussion and Conclusion}
%


We introduced a theoretical framework based on ensemble approximation and probability decomposition to quantify the probabilistic designability of RNA secondary structures under \finaledit{any {\em factorizable} energy model  
or  scoring function that decomposes onto individual loops.}
We further develop a linear-time dynamic-programming algorithm that efficiently searches for optimal decompositions among exponentially many choices. \finaledit{Together, the resulting bounds offer a novel, interpretable characterization of probabilistic designability: rather than a black-box score, each bound is explicitly supported by rival motifs that thermodynamically dominate the target, and the optimal decomposition identifies the specific motifs responsible for low global designability.}

Applied to the ArchiveII and Eterna100 datasets, our algorithm {\finaledit{LinearDecompose}} consistently produces tighter probability bounds under the Turner energy model than existing methods. Moreover, the resulting optimal decompositions offer interpretable explanations of structural design difficulty by identifying the specific motifs that limit global designability.

 {\finaledit{LinearDecompose}} still has \finaledit{some} limitations:
\begin{enumerate}
  \item Unless a structure is designed with a probability close to the probability bound found by \finaledit{LinearDecompose}, the bound may be optimistic \finaledit{but} loose.
  \item Ensemble approximation is not universally applicable; for some structures or motifs, suitable rival motifs are hard to sample.
\end{enumerate}

Future work includes improving rival motif generation to tighten ensemble approximations {\finaledit{(see Appendix Sec.~\ref{sec:tightness} for more details)}} and leveraging the optimal decomposition given by {\finaledit{LinearDecompose}} as a guide for RNA design.

\vspace{-0.3cm}

\section*{\finaledit{Acknowledgement}}

\finaledit{This work was supported in part by NSF grant 2330737 (L.H.~and D.H.M.).} 
\finaledit{We thank the reviewers for comments.}

\bibliographystyle{natbib}
\bibliography{references}

\clearpage
\setcounter{section}{0}
\pagenumbering{roman}
\setcounter{figure}{0}
\renewcommand{\thefigure}{S\arabic{figure}}
\setcounter{table}{0}
\renewcommand{\thetable}{S\arabic{table}}
\renewcommand{\thesection}{S\arabic{section}}
\begin{center}
{\LARGE\bfseries Supplementary Information}
\end{center}

\section{Structural Loops}\label{sec:loops}

\begin{figure}[htbp]
	\centering
    \pdftooltip{\includegraphics[width=0.5\textwidth]{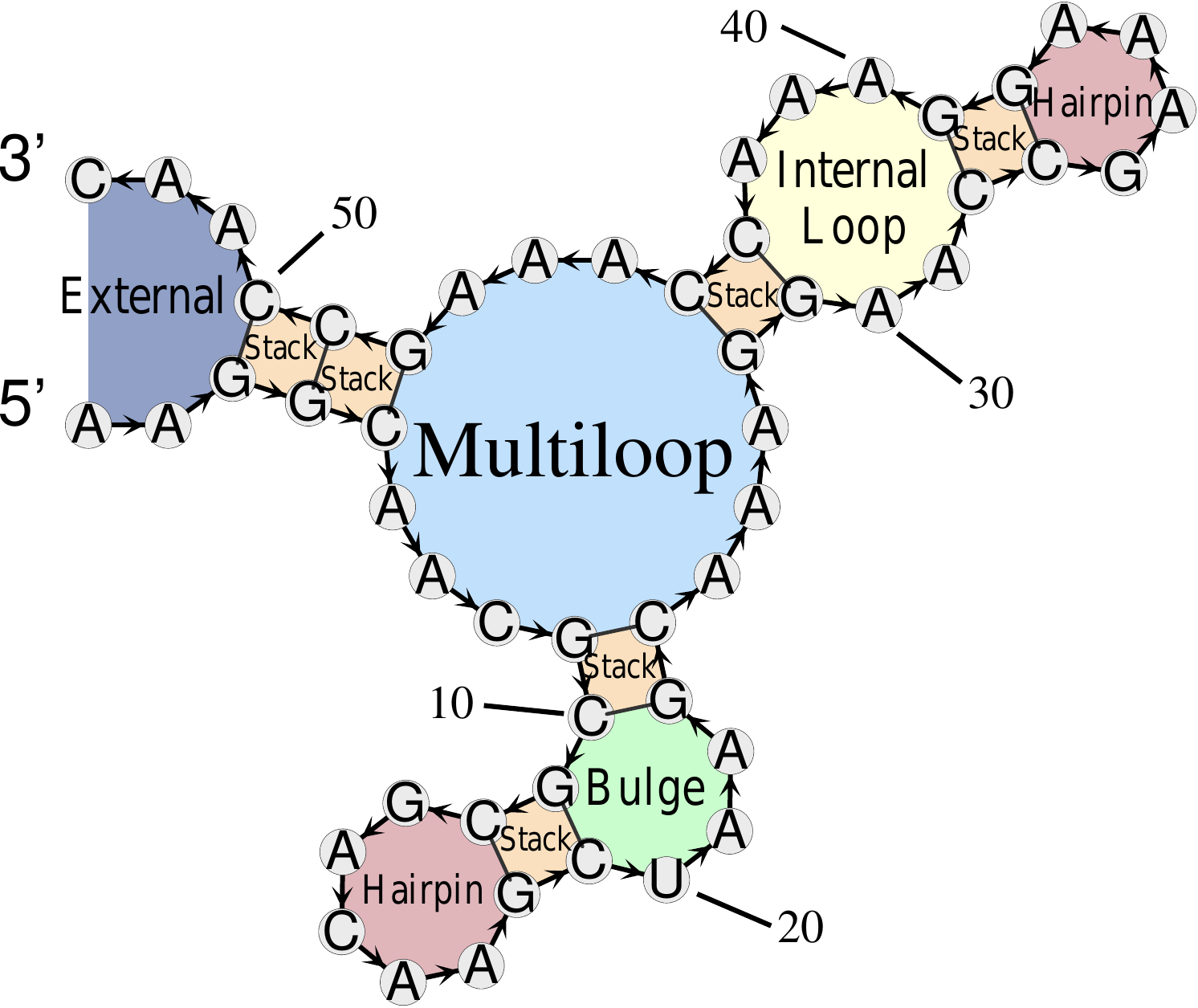}}{Diagram of an RNA secondary structure annotated with its loop types: hairpin (H), bulge (B), stack (S), internal loop (I), multiloop (M), and external loop (E), with arrows indicating 5' to 3' direction.}
    \captionof{figure}{An example of secondary structure and loops.} 
    \label{fig:loops_append}
\end{figure}

A secondary secondary structure can be decomposed into a collection of loops, where each loop is usually a region enclosed by base pair(s). Depending on the number of pairs on the boundary, main types of loops include hairpin loop, internal loop and multiloop, which are bounded by 1, 2 and 3 or more base pairs, respectively. In particular, the external loop is the most outside loop and is bounded by two ends ($5'$ and $3'$) and other base pair(s). Thus each loop can be identified by a set of pairs. Fig.~\ref{fig:loops_append}  showcases an example of secondary structure with various types of loops, where the some of the loops are notated as  

\begin{enumerate}
\item Hairpin: $H\blangle (12, 18) \brangle$.
\item Bulge: $B\blangle (10, 23), (11, 19) \brangle$.
\item Stack: $S\blangle (3, 50), (4, 49) \brangle$.
\item Internal Loop: $I\blangle (29, 43), (32, 39)\brangle$.
\item Multiloop: $M\blangle (5, 48), (9, 24), (28, 44)\brangle$.
\item External Loop: $E\blangle(3, 50) \brangle$.
\end{enumerate}

\begin{table}[ht]
        \centering
        \captionof{table}{Critical positions of loops in Fig.~\ref{fig:loop}}\label{tab:cr_ex}
        \begin{tabular}{|c|c|c|}
            \hline
            \multirow{2}{*}{Loop Type}  & \multicolumn{2}{|c|}{Critical Positions} \\ \cline{2-3}
            & Closing Pairs & Mismatches (Unpaired) \\ \hline
            External & 
            (3, 50) & 2, 51 \\
            Stack & (3, 50), (4, 49) & ~ \\
            Stack & (4, 49), (5, 48) & ~ \\
            Multi & (5, 48), (9, 24), (28, 44) & 4, 49, 8, 25, 27, 45 \\
            Stack & (9, 24), (10, 23) & ~ \\
            Bulge & (10, 23), (11, 19) & ~ \\
            Stack & (11, 19), (12, 18) & ~ \\
            Hairpin & (12, 18) & 13, 17 \\
            Stack & (28, 44), (29, 43) & ~ \\
            Internal & (29, 43), (32, 39) & 30, 42, 31, 40 \\
            Stack & (32, 39), (33, 38) & ~ \\
            Hairpin & (33, 38) & 34, 37 \\ \hline
        \end{tabular}
\end{table}

The function $\LP(\vecy)$ is used to denote the set of loops in a structure $\vecy$. The free energy of a secondary structure $\vecy$ is the sum of the free energy of each loop, 
\begin{equation}
  \DG(\vecx, \vecy) = \sum_{\vecz \in \loops(\vecy)} \DG(\vecx, \vecz),\notag
\end{equation}
The energy of each loop is typically determined by nucleotides on the positions of enclosing pairs and their adjacent mismatch positions, which are named as \emph{critical positions} in this article. Table~\ref{tab:cr_ex} lists the critical positions for all the loops in Fig.~\ref{fig:loop} and Table \ref{tab:cr} shows the indices of critical positions for each type of loops. Additionaly, some special hairpins of unstable triloops and stable tetraloops and hexaloops in Turner model have a separate energy lookup table. When evaluating the energy of a loop, it suffices to input only the nucleotides on its critical positions, i.e.,

\begin{equation}
 \DG(\vecx, \vecy) = \sum_{\vecz \in \loops(\vecy)} \DG(\vecx \proj \CR(\vecz), \vecz), \label{eq:e_sum}
\end{equation}

where  $\CR(\vecz)$ denotes the critical positions of loop $\vecz$ and  $\vecx\proj \CR(\vecz)$ denotes the nucleotides from $\vecx$ that are ``projected" onto $\CR(\vecz)$. 
The projection (\proj) allows us to focus on the relevant nucleotides for energy evaluation. For instance,
\begin{align}
  \CR(H\blangle (12, 18)\brangle) &= \{12, 13, 17, 18\},\\
  \CR(I\blangle (29, 43), (32, 39)\brangle) &= \{29, 30, 31, 32, 39, 40, 42, 43\}.
 \end{align}
For convenience, we also interchangeably put paired positions in brackets, i.e.,  
\begin{align}
 \CR(H\blangle (12, 18)\brangle) &= \{(12, 18), 13, 17\},\\
 \CR(I\blangle (29, 43), (32, 39)\brangle) &= \{(29, 43), (32, 39) , 30, 31, 40, 42\}.
\end{align}


\begin{table*}[t]
\centering
\captionsetup{justification=centering}

\caption{Critical positions for each type of loops under the Turner model implemented in ViennaRNA.
Special hairpins~\cite{Mathews+:2004} (triloops, tetraloops, and hexaloops) are not considered.}
\label{tab:cr}

\resizebox{.7\textwidth}{!}{%
\begin{tabular}{|c|c|c|}
\hline
\multirow{2}{*}{Loop Type} &
\multicolumn{2}{c|}{Critical Positions} \\ \cline{2-3}
& Closing Pairs & Mismatches \\ \hline
External
& $(i_1, j_1), (i_2, j_2), \ldots, (i_k, j_k)$
& $(i_1\!-\!1, j_1\!+\!1), \ldots, (i_k\!-\!1, j_k\!+\!1)$ \\
Hairpin
& $(i, j)$
& $i\!+\!1, j\!-\!1$ \\
Stack
& $(i, j), (k, l)$
& -- \\
Bulge
& $(i, j), (k, l)$
& -- \\
Internal
& $(i, j), (k, l)$
& $i\!+\!1, j\!-\!1, k\!-\!1, l\!+\!1$ \\
Multi
& $(i, j), (i_1, j_1), \ldots, (i_k, j_k)$
& $i\!+\!1, j\!-\!1, \ldots, i_k\!-\!1, j_k\!+\!1$ \\
\hline
\end{tabular}}
\end{table*}

\section{Projection Operation}\label{sec:proj}
\finaledit{The projection operation is a key step in our ensemble approximation algorithm, which allows us to focus on the relevant nucleotides for energy evaluation. The projection of a sequence $\vecx$ onto a set of positions $I$ is defined as follows:}
\begin{algorithm}[hbt!]
\editcolor
\caption{Projection $\hat{\vecx} = \vecx \proj I$}\label{alg:proj}
\begin{algorithmic}[1]
    \Function{Projection}{$\vecx, I$} \Comment{$I = [i_1, i_2, \ldots, i_n]$ is a list of critical positions}
    \State $\hat{\vecx} \gets \text{map}()$ \Comment{hash map}
    \For{$i$ in $I$}
        \State $\hat{\vecx}[i] \gets \vecx_i$ \Comment{Map  index $i$ to nucleotide $\vecx_i$}
    \EndFor
    \State \Return $\hat{\vecx}$
    \EndFunction
\end{algorithmic}
\end{algorithm}
\section{Structural \finaledit{Motif}}\label{sec:motif}


\begin{figure}[H]
    \centering

    \begin{minipage}[b]{0.25\textwidth}
        \centering
        \pdftooltip{\includegraphics[width=0.7\textwidth]{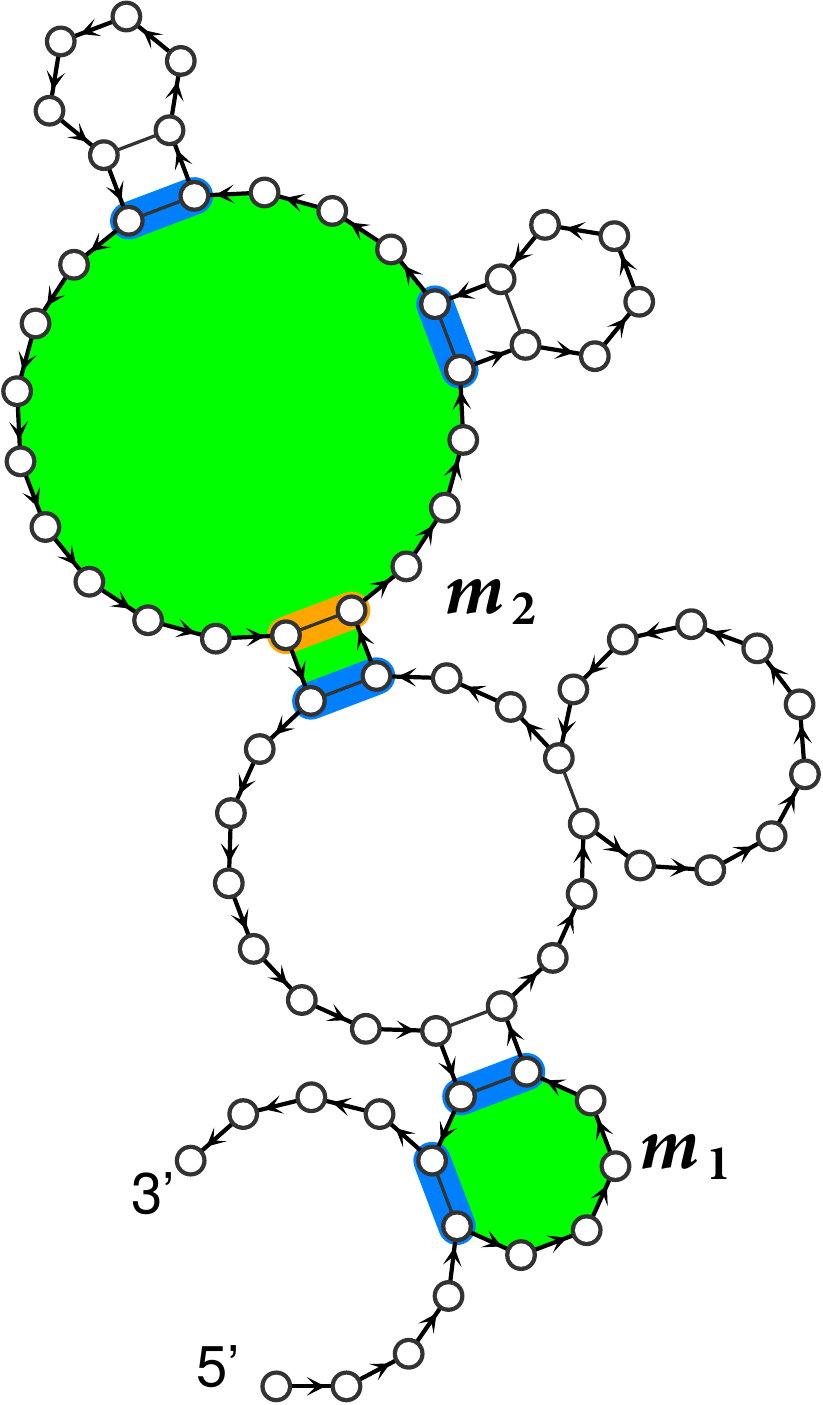}}{Motif m1 and m2 with cardinality 1 and 2 (one and two loops), showing loops in green, internal pairs in orange, and boundary pairs in blue.}
    \end{minipage}%
    \begin{minipage}[b]{0.25\textwidth}
        \centering
        \pdftooltip{\includegraphics[width=0.7\textwidth]{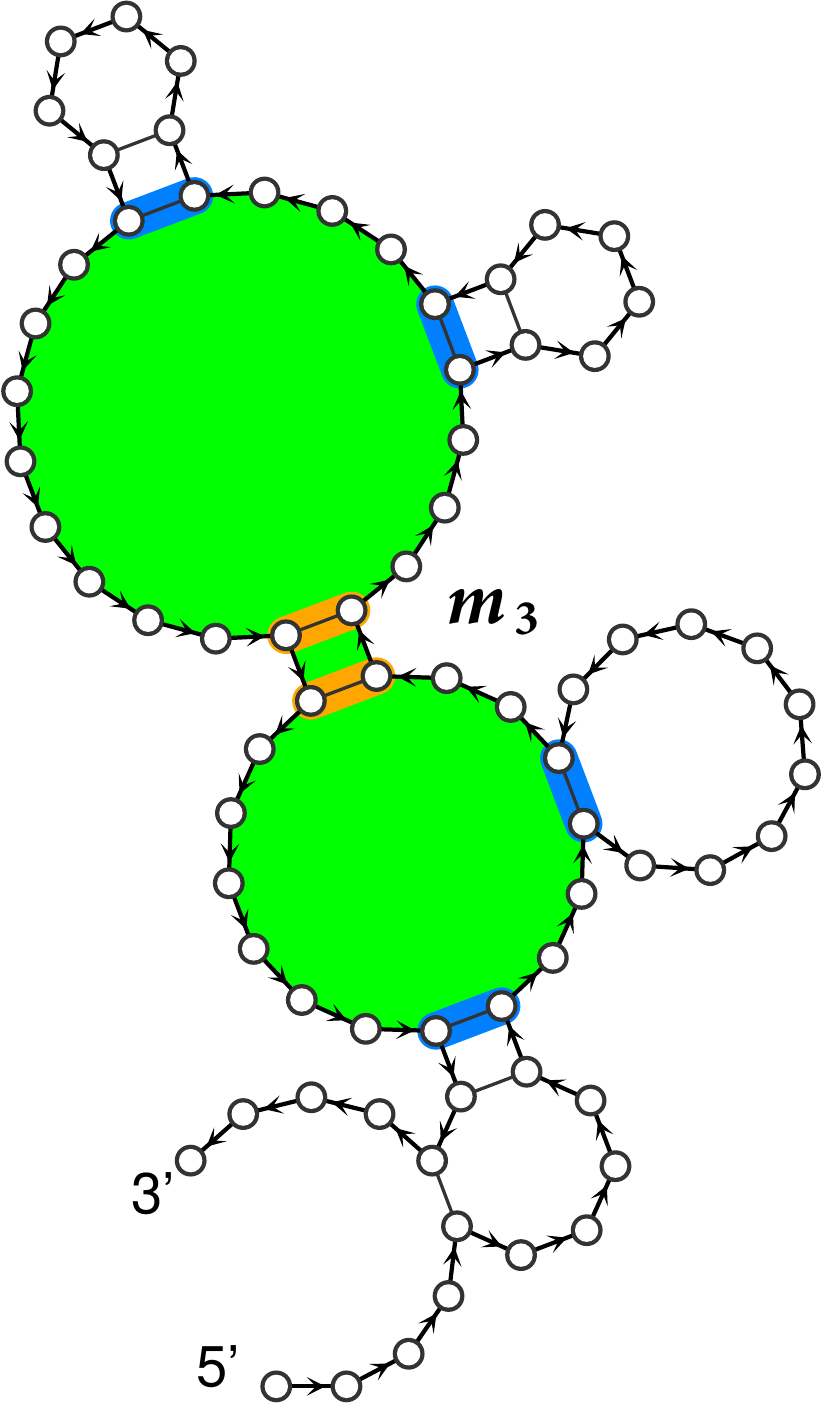}}{Motifs m3 with cardinality 3 (three loops), showing loops in green, internal pairs in orange, and boundary pairs in blue.}
    \end{minipage}

    \caption{Motifs of various cardinalities (numbers of loops):
    $\card(\M_1)\!=\!1$, $\card(\M_2)\!=\!2$, $\card(\M_3)\!=\!3$.
    Loops are highlighted in green, internal pairs ($\ipairs$) in orange and boundary pairs ($\bpairs$) in blue.}
    \label{fig:card}
\end{figure}

\subsection{Motif is a Generalization of Structure}\label{subsec:motif}
\begin{definition}\label{def:motif-loops}
A \m \vecm is a contiguous (sub)set of loops in an RNA secondary structure \vecy, notated $\vecm\subseteq\vecy$.
 \end{definition}
 
 Many functions defined for secondary structures can also be applied to motifs. For example, $\LP(\M)$ represents the set of loops within a motif \M, while $\pairs(\M)$ and $\unpaired(\M)$ represent the sets of base pairs and unpaired positions, respectively. We define the \emph{cardinality} of \vecm  as the number of loops in \vecm , i.e., $\card(\vecm) = |\loops(\vecm)|$. Fig.~\ref{fig:card} illustrates three motifs, $\M_1, \M_2$, and $\M_3$, in a structure adapted from the Eterna puzzle ``\texttt{Cat's Toy}''. These motifs contain 1, 2, and 3 loops, respectively. We also define the \emph{length} of a motif $|\vecm|$ as the number of bases it contains, which is consistent with the length of a secondary structure $|\vecy|$.

Since motifs are defined as sets of loops, we can conveniently use set relations to describe their interactions. A motif $\M_A$ is a \emph{sub-motif} of another motif $\M_B$ if $\M_A$ is contained within $\M_B$, denoted as $\M_A \subseteq \M_B$. For the motifs in Fig.~\ref{fig:card}, we observe the relation $\M_2 \subseteq \M_3$. We further use $\M_A \subset \M_B$ to indicate that $\M_A$ is a proper sub-motif of $\M_B$, meaning $\M_A \neq \M_B$. Therefore, $\M_2 \subset \M_3$. The entire structure $\vecy$ can be regarded as the largest motif within itself, and accordingly, $\M \subseteq \vecy$ signifies that motif \M is a part of structure $\vecy$, with $\M \subset \vecy$ implying \M is strictly smaller than \vecy. 

The loops in a motif \M are connected by base pairs. Each base pair in $\pairs(\M)$ is classified as either an \emph{internal pair} linking two loops in \M or a \emph{boundary pair} connecting one loop inside \M to one outside. These two types of pairs in \M are denoted as disjoint sets $\ipairs(\M)$ and $\bpairs(\M)$, respectively:
\begin{align}
\ipairs(\M) \cap \bpairs(\M) = \emptyset,~\ipairs(\M) \cup \bpairs(\M) = \pairs(\M).
\end{align}
Utilizing the commonly accepted nearest neighbor model for RNA folding, it becomes evident that certain motifs may be absent from structures folded from RNA sequences. For instance, motif $\M_3$ in Fig.~\ref{fig:card} is considered undesignable, as the removal of its two internal pairs consistently reduces the free energy. This brings us to the definition of an  \emph{\um}.

\subsection{Motif Ensemble from Constrained Folding}
The designability of motifs is based on \emph{constrained folding}. Given a sequence $\vecx$, a structure in its ensemble $\vecy \in \mathcal{Y}(\vecx)$, we can conduct constrained folding by constraining the boundary pairs of $\M$, i.e., $\bpairs(\vecm)$. We generalize the concept of (structure) \emph{ensemble} to \emph{motif ensemble} as the set of motifs  that \vecx can possibly fold into (under the constraint $\bpairs(\vecm)$ being forced), denoted as $\mathcal{M}(\vecx, bpairs(\vecm))$. 
In the context of constrained folding, the folding outcomes are unaffected by the nucleotides at the constrained positions. Thus, with slight notation abuse, we use $\vecx$ to denote a partial sequence corresponding to a motif $\vecm$, where each position in $\vecx$ matches a position in $\vecm$, and vice versa. By this definition, the motif ensemble of a partial sequence $\vecx$ is denoted as $\mathcal{M}(\vecx)$. Similarly, the notation $\mathcal{X}(\vecm)$ generalizes $\mathcal{X}(\vecy)$ and represents all (partial) RNA sequences whose motif ensembles contain $\vecm$.
Motifs in $\mathcal{M}(\vecx, $\bpairs(\vecm)$)$ have the same boundary pairs, i.e., 
\begin{equation}
\forall \vecm', \vecm'' \in \mathcal{M}(\vecx), \bpairs(\vecm') = \bpairs(\vecm'') = \bpairs(\vecm). 
\end{equation}

The \emph{free energy change} of a motif $\vecm$ is the sum of the free energy of the loops in \vecm,

\begin{equation}\label{eq:emotif}
\DG(\vecx, \vecm) = \sum_{\vecz\in \loops(\vecm)}  \DG(\vecx, \vecz).
\end{equation} 

The definitions of \MFE and \UMFE can also be generalized to motifs via constrained folding.
\begin{definition}
A motif $\mstar \subseteq \vecy$ is an \MFE motif of folding $\vecx$ under constraint $\bpairs(\vecm)$ , i.e., $\MFE(\vecx, \bpairs(\vecm))$, if and only if
\begin{equation}
\begin{aligned}
\forall \vecm \in \mathcal{M}(\vecx, \bpairs(\vecm))  \text{ and }\vecm \ne \mstar, \\ \DG(\vecx, \mstar) \leq  \DG(\vecx, \vecm). \label{def:mfe-motif}
\end{aligned}
\end{equation}

\end{definition}
\begin{definition} \label{def:umfe-motif}
A motif $\mstar \subseteq \vecy$ is an \UMFE motif of folding $\vecx$ under constraint $\bpairs(\vecm)$ , i.e., $\UMFE(\vecx, \bpairs(\vecm))$, if and only if 
\begin{equation}
\begin{aligned}
\forall \vecm \in \mathcal{M}(\vecx, \bpairs(\vecm))  \text{ and }\vecm \ne \mstar, \\ \DG(\vecx, \mstar) <  \DG(\vecx, \vecm). \label{eq:umfe-motif} 
\end{aligned}
\end{equation}
\end{definition}

Similarly, the equilibrium probability of a sequence folding into the motif is defined as, 
\begin{equation}
\begin{aligned}
p(\vecm \mid \vecx) &= \frac{e^{-\DG(\vecx, \vecm)/RT}}{Q(\vecx)} \\
&= \frac{e^{-\DG (\vecx, \vecm)/RT}}{\sum_{\vecm' \in \mathcal{M}(\vecx, \bpairs(\vecm))}e^{-\DG(\vecx, \vecm')/RT}}. \label{eq:prob_m}
\end{aligned}
\end{equation}

\editcolor


\section{Additional Pseudocode}\label{sec:alg}
\finaledit{The motif generation procedure used in \finaledit{LinearDecompose} is shown in Algorithm~\ref{alg:motif_gen}, which generates candidate motifs for each loop node based on the specified constraints of depth, width, and number of loops.}
\begin{algorithm}[t]
\caption{Constrained Motif Generation from a Loop}
\label{alg:motif_gen}
\begin{algorithmic}[1]
\Function{MotifGen}{$\eta, \maxdepths, \maxwidth, \maxloop$}
  \State $candidates \gets \emptyset$
  \For{$d \gets 1$ to \maxdepths}
    \State $\boldsymbol{M}_d \gets \emptyset$  \Comment{motifs of depth $d$ to be generated}
    \If{d = 1}
    	\State $\boldsymbol{M}_d \gets \{\eta\}$ \Comment{the motif is a single node $\eta$}
     \Else
          \For{$\vecm \in \boldsymbol{M}_{d-1} $}
                 \State $\vecm_{\text{new}} \gets \Call{Grow}{\vecm}$
                 \If{$\Call{Width}{\vecm_{\text{new}}} > \maxwidth$}
                 	\State \textbf{continue}
	         \EndIf
	         \If{$|\Call{Loops}{\vecm_{\text{new}}}| > \maxloop$}
                 	\State \textbf{continue}
	         \EndIf
	         \State $\boldsymbol{M}_d \gets \boldsymbol{M}_d \cup \vecm_{\text{new}}$
	  \EndFor
    \EndIf
    \State $candidates \gets candidates \cup \boldsymbol{M}_d$
  \EndFor
  \State \Return $candidates$
\EndFunction
\end{algorithmic}
\end{algorithm}

\section{Discussion on the Tightness of Ensemble Approximation}\label{sec:tightness}

The tightness of the probability bound produced by ensemble approximation (Algorithms~\ref{alg:approxe_1} and~\ref{alg:approxe_2}) depends on two key factors: the \emph{quality} and the \emph{number} of rival structures (motifs).

\subsection{Quality of rival structures.}
A rival structure is considered high quality if it thermodynamically dominates the target across a broad range of sequences, i.e., $\DDG(\vecx, \vecy', \ystar)$ is large and positive for many sequence assignments. Such rivals push the estimated bound down substantially. In contrast, a rival that only marginally outcompetes the target in a narrow region of sequence space contributes little to tightening the bound. In practice, rival quality is limited by the sampling strategy: rivals are obtained by folding sequences designed for the target, so they tend to be structurally similar to the target with small differential position sets $\D(\vecy', \ystar)$. When the sampled rivals are structurally distant or energetically weak competitors, the resulting bound may be loose.

\subsection{Number of rival structures.}
Including more rivals in the ensemble approximation directly tightens the bound, because each additional rival contributes an extra term to the denominator of the probability bound (Eq.~\ref{eq:rival}). Intuitively, more competitors in the approximated ensemble means the target structure claims a smaller share of the Boltzmann weight. The bound can only decrease as rivals are added, so the method is conservative by construction. However, adding more rivals also increases the size of the overall differential positions $\D(Y_r, \ystar)$ (Eq.~\ref{eq:dp_multiple}), which exponentially increases the enumeration cost. In practice, we impose an upper limit on the enumeration size, which may prevent the bound from fully benefiting from a large rival set.

\subsection{When bounds may be loose.}
Bounds tend to be loose when: (1) the sampled rivals are of low quality, e.g., they only weakly dominate the target; (2) the number of sampled rivals is small; or (3) the enumeration limit is reached before the best sequence assignment is found. These conditions are more likely for large structures with complex loop configurations, where the differential position set grows quickly and sampling diverse, high-quality rivals is harder.

\subsection{Utility for evaluating RNA design results.}
Beyond certifying undesignability, the probability bounds provide a practical tool for evaluating and comparing RNA design results. Table~\ref{tab:plots} illustrates varying degrees of bound tightness across different Eterna100 puzzles. For example, \emph{Simple Single Bond} achieves a best design probability of 0.370 against a bound of 0.394 — a tight gap indicating the structure is close to its theoretical limit and the design is near-optimal. In contrast, \emph{1, 2, 3 and 4 bulges} achieves only 0.005 against a bound of 0.303, suggesting substantial room for improvement. Similarly, \emph{multilooping fun} has a bound of 0.0006, consistent with the negligible best design probability of $2 \times 10^{-6}$, confirming that this structure is highly constrained and essentially undesignable at the ensemble level. These examples demonstrate that probability bounds provide a richer, quantitative characterization of design difficulty than binary undesignability criteria alone.

\section{Detailed Plots}
The following figures show the probability bounds vs.\ achieved probabilities for all 1144 ArchiveII structures and 100 Eterna100 structures.
In addition, the per-structure bounds from baselines (CountingDesign and CountingDesign+) are also scattered for comparison.
\begin{figure}[ht] 
\centering
  \pdftooltip{%
  \includegraphics[width=\linewidth]{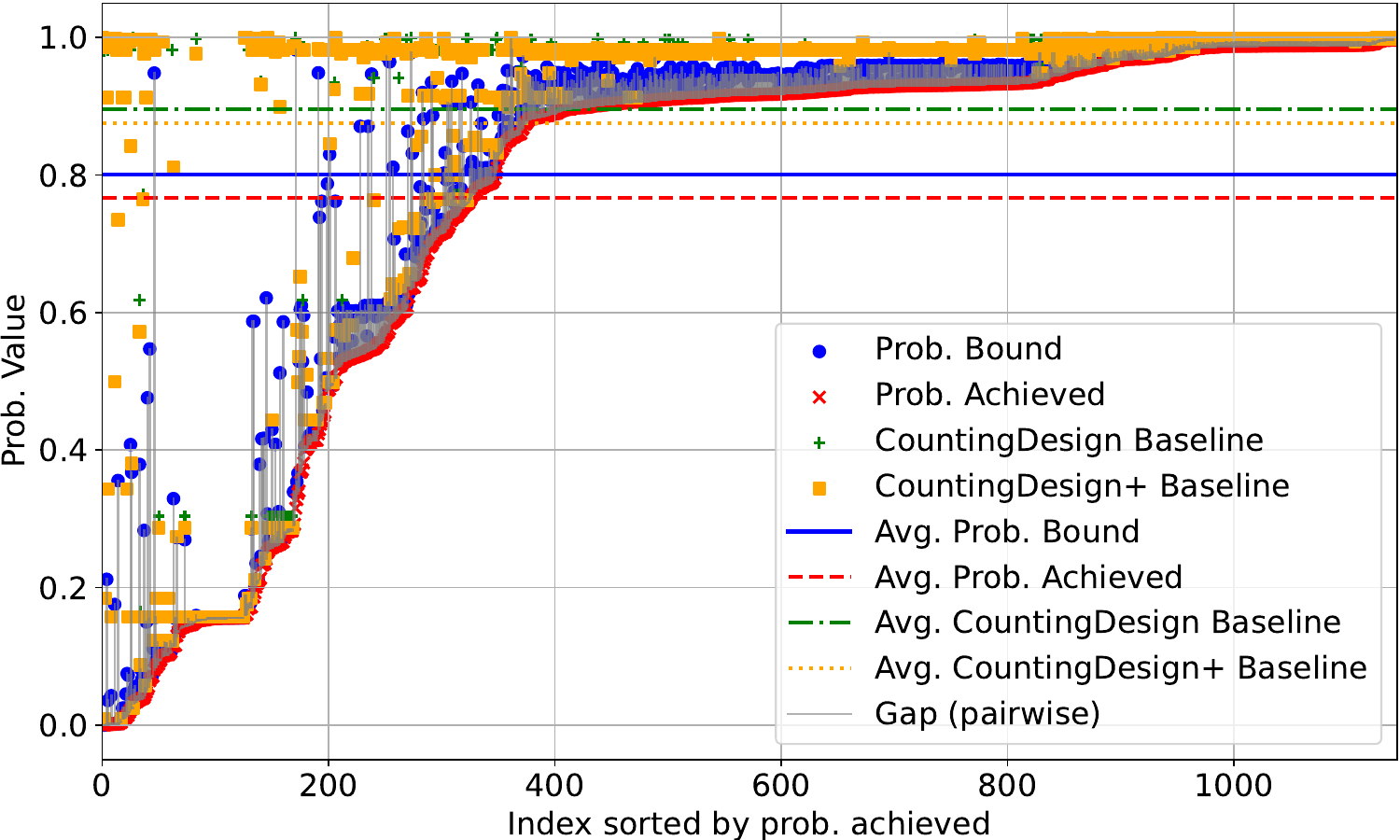}%
}{Scatter plot of probability bounds vs. achieved probabilities for 1144 ArchiveII structures. LinearDecompose yields tighter bounds than baselines.}
  \caption{Probability bounds vs.~achieved $p(\ystar\mid\vecx)$ on ArchiveII.}\
  \label{fig:bounds_archiveii_2}
%
  \centering
  \pdftooltip{%
  \includegraphics[width=\linewidth]{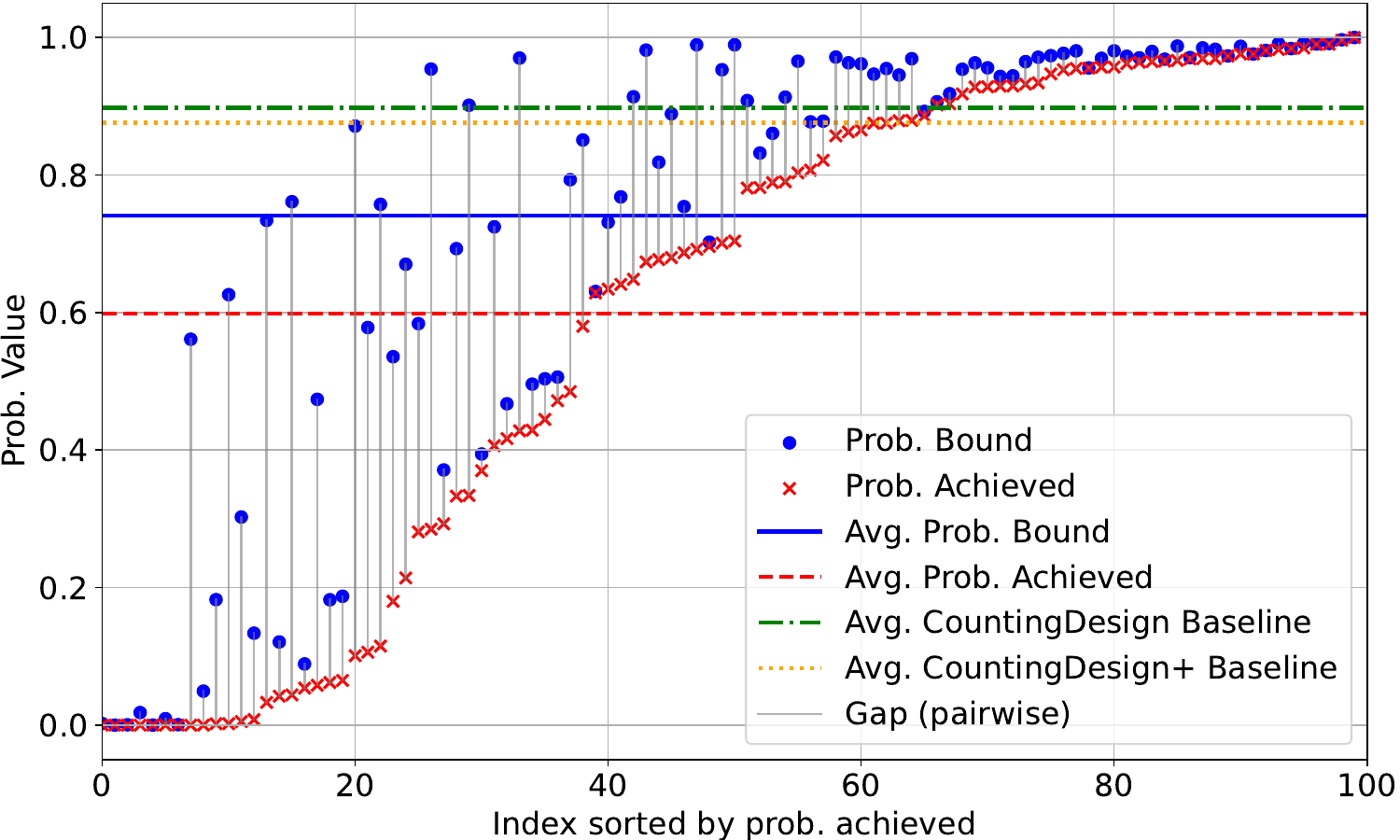}%
}{Scatter plot of probability bounds vs.\ achieved probabilities for 100 Eterna100 structures. LinearDecompose yields tighter bounds than baselines, with a larger gap between bounds and achieved probabilities compared to ArchiveII.}
  \caption{Probability bounds vs.~achieved $p(\ystar\mid\vecx)$ on Eterna100.}\
  \label{fig:bounds_eterna100_2}
  \vspace{-0.5cm}
\end{figure}

\end{document}